\newcommand{\point}{{\bullet}} 
\renewcommand{\phi}{\varphi}
\newcommand*{\llbrace}{%
  \BeginAccSupp{method=hex,unicode,ActualText=2983}%
    \textnormal{\usefont{OMS}{lmr}{m}{n}\char102}%
    \mathchoice{\mkern-4.05mu}{\mkern-4.05mu}{\mkern-4.3mu}{\mkern-4.8mu}%
    \textnormal{\usefont{OMS}{lmr}{m}{n}\char106}%
  \EndAccSupp{}%
}
\newcommand*{\rrbrace}{%
  \BeginAccSupp{method=hex,unicode,ActualText=2984}%
    \textnormal{\usefont{OMS}{lmr}{m}{n}\char106}%
    \mathchoice{\mkern-4.05mu}{\mkern-4.05mu}{\mkern-4.3mu}{\mkern-4.8mu}%
    \textnormal{\usefont{OMS}{lmr}{m}{n}\char103}%
  \EndAccSupp{}%
}
\newcommand{\dt}[1]{\textbf{\emph{#1}}} 
\renewcommand{\c}{\mathbin{\mbox{\small\upshape\textbf{;}}}} 
\newcommand{\lagree}{\mathring{\sim}}
\newcommand{\ssc}{\mathsf{ssc}\,}
\newcommand{\PSC}{\mathsf{PSC}\,}
\newcommand{\IPF}{\mathsf{IPF}}
\newcommand{\dirimg}[1]{\langle #1 \rangle} 
\newcommand{\lfp}{\mathsf{lfp}\,}
\newcommand{\Ni}{\mathord{\ni}} 
\newcommand{\join}{\mathbin{\sqcup}} 
\newcommand{\ijoin}{\mathbin{\ovee}} 
\newcommand{\choice}{\oplus} 
\newcommand{\pbot}{\underline{\bot}} 
\newcommand{\converse}[1]{#1^\cup} 
\newcommand{\sbs}{\subseteq}
\newcommand{\sps}{\supseteq}
\newcommand{\rsps}{\mathord{\supseteq}} 
\newcommand{\hycond}[1]{\mathbin{ \triangleleft~#1~\triangleright }} 
\newcommand{\relto}{\mathbin{\multimap}} 
\newcommand{\Sta}{\Sigma}
\newcommand{\Trc}{\mathsf{Trc}}
\newcommand{\rdom}{\mathsf{dom}} 
\newcommand{\Dom}{\mathsf{Dom}} 
\newcommand{\aftqua}{\:\mbox{\small$\bullet$}\:}
\newcommand{\all}[2]{\forall #1 \aftqua #2}
\newcommand{\some}[2]{\exists #1 \aftqua #2}
\newcommand{\pset}{\wp} 
\newcommand{\ppset}{\pset^2} 
\newcommand{\Dpset}{\breve{\pset}} 
\newcommand{\union}{\mathbin{\mbox{\small$\cup$}}}
\newcommand{\intersect}{\mathbin{\mbox{\small$\cap$}}}
\newcommand{\imp}{\Rightarrow}
\newcommand{\impby}{\Leftarrow}
\newcommand{\rt}{\sqsubseteq}
\newcommand{\rf}{\sqsupseteq}
\newcommand{\meansR}[1]{\llbracket\, #1 \,\rrbracket} 
\newcommand{\meansT}[1]{\llbrace\, #1 \,\rrbrace} 
\newcommand{\meansH}[1]{\llparenthesis\, #1 \,\rrparenthesis} 
\newcommand{\args}{{\!-\!}} 
\newcommand{\F}{\mathsf{F}}
\newcommand{\G}{\mathsf{G}}
\renewcommand{\H}{\mathsf{H}}
\newcommand{\K}{\mathsf{K}}
\newcommand{\NI}{\mathsf{NI}}
\newcommand{\Agrl}{\mathsf{Agrl}\,} 
\renewcommand{\P}{\mathbb{P}} 
\newcommand{\Q}{\mathbb{Q}} 
\newcommand{\A}{\mathbb{A}} 
\newcommand{\Hp}{\mathbb{H}}
\newcommand{\keyw}[1]{\ensuremath{\mathsf{#1}}}
\newcommand{\skipc}{\keyw{skip}}
\newcommand{\ifc}[3]{\keyw{if}\ {#1}\ \keyw{then}\ {#2}\ \keyw{else}\ {#3}}
\newcommand{\whilec}[2]{\keyw{while}\ {#1}\ \keyw{do}\ {#2}}
\newenvironment{calcu}{                   
  \[ \begin{array}{rl}
  }{
  \end{array} \] }
\newcommand{ \fln }[1]{ & #1             \protect\\*[.6ex] }
\newcommand{ \pln }[3]{%
   #1  &\quad\; \Lbag\mbox{ #3 }\Rbag \protect\\*[.6ex]
       & #2         \protect\\*[.6ex] }
\newenvironment{calcushort}{                   
  \[ \begin{array}{rll}
  }{
  \end{array} \] }
\newcommand{ \flns }[1]{ & #1             \protect\\*[-.5ex] }
\newcommand{ \plns }[3]{%
   #1  &&       \quad \mbox{ #3 } \protect\\*[-.5ex]
       & #2         \protect\\*[-.5ex] }
\title{Whither Specifications as Programs
\thanks{The authors were partially supported by NSF award 1718713}
}
\author{David A. Naumann and Minh Ngo}
\institute{Stevens Institute of Technology, USA\\
\email{naumann@cs.stevens.edu}
\email{nngo1@stevens.edu}
}
\begin{document}
\maketitle


\begin{abstract}
Unifying theories distil common features of programming languages and design methods by means of algebraic operators and their laws.  Several practical concerns --- e.g., improvement of a program, conformance of code with design, correctness with respect to specified requirements  --- are subsumed by the beautiful notion that programs and designs are special forms of specification and their relationships are instances of logical implication between specifications.   Mathematical development of this idea has been fruitful but limited to an impoverished notion of specification: trace properties.  Some mathematically precise properties of programs, dubbed hyperproperties, refer to traces collectively.  For example, confidentiality involves knowledge of possible traces. This article reports on both obvious and surprising results about lifting algebras of programming to hyperproperties, especially in connection with loops, and suggests directions for further research.  The technical results are: a compositional semantics, at the hyper level, of imperative programs with loops, and proof that this semantics coincides with the direct image of a standard semantics, for subset closed hyperproperties.
\end{abstract}

\section{Introduction}\label{sec:intro}

A book has proper spelling provided that each of its sentences does.  
For a book to be captivating and suspenseful --- that is not a property that can be reduced to a property of its individual sentences.  Indeed, few interesting properties of a book are simply a property of all its sentences.
By contrast, many interesting requirements of a program can be specified as so-called \dt{trace properties}: there is some property of traces (i.e., observable behaviors) which must be satisfied by all the program's traces.  

The unruly mess of contemporary programming languages, design tools, and approaches to formal specification has been given a scientific basis through unifying theories that abstract commonalities by means of algebraic operators and laws. 
Algebra abstracts from computational notions like partiality and nondeterminacy by means of operators that are interpreted as total functions and which enable equational reasoning.
Several practical concerns --- such as improving a program's resource usage while not altering its observable behavior, checking conformance of code with design architecture, checking satisfaction of requirements, 
and equivalence of two differently presented designs --- are subsumed by the beautiful notion that programs and designs\footnote{This paper was written with the UTP~\cite{HoareHe} community in mind,
but our use of the term ``design'' is informal and does not refer to the technical notion in UTP.}
are just kinds of specification and their relationships are instances of logical implication between specifications.  
Transitivity of implication yields the primary relationship: the traces of a program are included in the traces allowed by its specification.
The mathematical development of this idea has been very successful --- for trace properties.

Not all requirements are trace properties.  A program should be easy to read, 
consistent with dictates of style, and amenable to revision for adapting to changed requirements.
Some though not all such requirements may be addressed by mathematics; e.g., parametric polymorphism is a form of modularity that facilitates revision through reuse.  In this paper we are concerned with requirements that are extensional in the sense that they pertain directly to observable behavior.  
For a simple example, consider a program acting on variables $hi,lo$ where the initial value of $hi$ is meant to be a secret, on which the final value of $lo$ must not depend.  Consider this simple notion of program behavior: 
a state assigns values to variables, and a trace is a pair: the initial and final states.  
The requirement cannot be specified as a trace property, but it can be specified as follows: 
for any two traces $(\sigma,\sigma')$ and $(\tau,\tau')$, if the initial states $\sigma$ and $\tau$ have the same value for $lo$ then so do the final states.
In symbols: $\sigma(lo)=\tau(lo) \imp \sigma'(lo)=\tau'(lo)$.  

Some requirements involve more than two traces, e.g., ``the average response time is under a millisecond'' can be made precise by averaging the response time of each trace, over all traces, perhaps weighted by a distribution that represents likelihood of different requests.
For a non-quantitative example, consider the requirement that a process in a distributed system should know which process is the leader: something is known in a given execution if it is true in all possible traces that are consistent with what the process can observe of the given execution (such as a subset of the network messages).
In the security literature, some information flow properties are defined by closure conditions on the program's traces, such as: for any two traces, there exists a trace with the high (confidential) events of the first and the low (public) events of the second.

This paper explores the notion that just as a property of books is a set of books, not necessarily defined simply in terms of their sentences,
so too a property of programs is a set of programs, not merely a set of traces.  
The goal is to investigate how the algebra of programming can be adapted for reasoning about non-trace properties.
To this end, we focus on the most rudimentary notion of trace, i.e., pre/post pairs, and rudimentary program constructs.
We conjecture that the phenomena and ideas are relevant to a range of models, perhaps even the rich notions of trace abstracted by variations on concurrent Kleene algebra~\cite{HoareMSW11}.

It is unfortunate that the importance of trace properties in programming has led to well established use of the term ``property'' for trace property,
and recent escalation in terminology to ``hyperproperty'' to designate the general notion of program property --- sets of programs rather than sets of traces~\cite{ClarksonS10,ClarksonFKMRS14}.  
Some distinction is needed, so for clarity and succinctness we follow the crowd.  
The technical contribution of this paper can now be described as follows: we give a lifting of the fixpoint semantics of loops to the ``hyper level'', and show anomalies that occur with other liftings.  
This enables reasoning at the hyper level with usual fixpoint laws for loops,
while retaining consistency with standard relational semantics. 
Rather than working directly with sets of trace sets, our lifting uses a simpler model,
sets of state sets; this serves to illustrate the issues and make connections with other models that may be familiar.
The conceptual contribution of the paper is to call attention to the challenge of unifying theories of programming that encompass requirements beyond trace properties.

\paragraph{Outline.}
Section~\ref{sec:rel} describes a relational semantics of imperative programs and defines an example program property that is not a trace property.
Relational semantics is connected, in Section~\ref{sec:trans}, with semantics mapping sets to sets, like forward predicate transformers.
Section~\ref{sec:htrans} considers semantics mapping sets of sets to the same, this being the level at which hyperproperties are formulated.  Anomalies with obvious definitions motivate a more nuanced semantics of loops.  
The main technical result of the paper is Theorem~\ref{thm:main} in this section, connecting the semantics of 
Section~\ref{sec:htrans} with that of Section~\ref{sec:trans}.
Section~\ref{sec:spec} connects the preceding results with the intrinsic notion of satisfaction for hyperproperties,  
and sketches challenges in realizing the dream of reasoning about hyperproperties using only refinement chains.
The semantics and theorem are new, but similar to results in prior work discussed in Section~\ref{sec:related}.
Section~\ref{sec:concl} concludes.


\section{Programs and specifications as binary relations}\label{sec:rel}

\paragraph{Preliminaries.}

We review some standard notions, to fix notation and set the stage.
Throughout the paper we assume $\Sta$ is a nonempty set, which stands for the set of program states, or data values, on which programs act.  
For any sets $A$, $B$,
let $A \relto B$ denote the binary relations from $A$ to $B$; that is,
$A \relto B$ is $\pset(A\times B)$ where $\pset$ means powerset.
Unless otherwise mentioned, we consider powersets, including $\Sta\relto\Sta$, to be ordered by inclusion ($\sbs$).

We write $A\to B$ for the set of functions from  $A$ to $B$.
For composition of relations, and in particular composition of functions, we use infix symbol $\c$ in the forward direction.  
Thus for relations $R,S$ and elements $x,y$ we have
$x(R\c S)y$ iff $\some{z}{xRz \land zSy}$.  
For a function $f: A\to B$ and element $x\in A$ we write application as $f x$ and let it associate to the left.
Composition with $g:B\to C$ is written $f\c g$, as functions are treated as special relations,  so $(f\c g)x = g(f x)$.  
The symbol $\c$ binds tighter than $\union$ and other operators.

For a relation $R:A \relto B$, the direct image
 $\dirimg{R}$ is a total function $\pset A \to \pset B$
defined by
$y\in\dirimg{R}p$ iff $\some{x\in p}{xRy}$.
It faithfully reflects ordering of relations:
\[ 
R\sbs S \quad\mbox{iff}\quad \dirimg{R}\rt\dirimg{S}
\]
where $\rt$ means pointwise order 
 (i.e., 
$\phi\rt\psi$ iff $\all{ p\in\pset A }{ \phi\, p \sbs \psi\, p}$).
We write $\join$ for pointwise union,
defined by $(\phi\join\psi)\,p = \phi\, p \union \psi\, p$.
The $\rt$-least element is the function $\lambda p \aftqua \emptyset$, abbreviated as $\bot$.
A relation can be recovered from its direct image:
\begin{equation}\label{eq:relRecover}
R = sglt \c \dirimg{R} \c \Ni 
\end{equation}
where $sglt : A \to \pset A$ maps element $a$ to singleton set $\{a\}$ and $\Ni: \pset A \relto A$ is the converse of the membership relation.
Note that $\bot$ is the direct image of the empty relation.
Direct image is functorial and distributes over union:
\[
\dirimg{id_\Sta} = id_{\pset\Sta}
\qquad
\dirimg{R \c S} = \dirimg{R} \c \dirimg{S}
\qquad
\dirimg{R \union S} = \dirimg{R} \join \dirimg{S}
\]
We write $id$ for identity function on the set indicated.
In fact $\dirimg{-}$ distributes over arbitrary union,
i.e., sends any union of relations to the pointwise join of their images.  
Also, $\dirimg{R}$ is universally disjunctive, and (\ref{eq:relRecover}) forms a bijection
between universally disjunctive functions 
$\pset A\to\pset B$ and relations $A\relto B$.

In this paper we use the term \dt{transformer} for monotonic functions of type 
$\pset A \to\pset B$. 
For $\phi:\pset A\to\pset B$ to be monotonic is equivalent to 
$(\rsps\c \phi) \sbs (\phi\c\rsps)$.


We write $\lfp$ for the least-fixpoint operator.    
For monotonic functions $f:A\to A$ and $g:B\to B$ where $A,B$ are sufficiently complete 
posets that $\lfp f$ and $\lfp g$ exist, 
the \dt{fixpoint fusion} rule says that for strict and continuous $h:A\to B$,
\begin{equation}\label{eq:fusion}
f\c h = h\c g \imp h(\lfp f) = \lfp g
\end{equation}
Inequational forms, such as $f\c h \leq h\c g \imp h(\lfp f) \leq \lfp g$,
are also important.\footnote{Fusion rules, also called fixpoint transfer, 
can be found in many sources, 
e.g.,~\cite{AartsEtal95,Back:book}.
We need the form in Theorem 3 of~\cite{Cousot02},
for Kleene approximation of fixpoints.
}

\paragraph{Relational semantics.}

The relational model suffices for reasoning about terminating executions.  
If we write $x+2\leq x'$ to specify a program that increases $x$ by at least two,
we can write this simple refinement chain:
\[ x+2\leq x' \quad \sps \quad x := x+3 \choice x := x+5 
\quad \sps \quad x := x+3 
\]
to express that the nondeterministic choice ($\choice$) between adding 3 or adding 5 refines the specification and is refined in turn by the first alternative.    
Relations model a good range of operations including relational converse and intersection which are not implementable in general but are useful for expressing specifications.  
Their algebraic laws facilitate reasoning.  
For example, choice is modeled as union, so the second step is from a law of set theory: $R\union S\sps R$.

Equations and inequations may serve as specifications.
For example, to express that relation $R$ is deterministic we can write $\converse{R}\c R \sbs id$,
where $\converse{R}$ is the converse of $R$.
Note that this uses two occurrences of $R$.
Returning to the example in the introduction, suppose $R$ relates 
states with variables $hi,lo$.
To formulate the \dt{noninterference} property that the final value of $lo$ is independent of the initial value of $hi$,
it is convenient to define a relation on states that says they have the same value for $lo$:
define $\lagree$ by $\sigma\lagree\tau$ iff $\sigma(lo)=\tau(lo)$.
The property is 
\[ \all{\sigma,\sigma',\tau,\tau'}{ \sigma R \sigma' \land \tau R \tau'  \land \sigma\lagree\tau  \imp \sigma'\lagree\tau' }\]
This is a form of determinacy. A weaker notion allows multiple outcomes for $lo$ but the set of possibilities should be independent from the initial value of $hi$.
\[ \all{\sigma,\sigma',\tau}{ \sigma R \sigma' \land \sigma\lagree\tau  \imp 
     \some{\tau'}{ \tau R \tau' \land \sigma'\lagree\tau' }}\]
This is known as \dt{possibilistic noninterference}.
It can be expressed without quantifiers, by the usual simulation inequality:
\begin{equation}\label{eq:sim}
\mathord{\lagree}\c R \;\sbs\; R \c \mathord{\lagree} 
\end{equation}
Another equivalent form is
\( \lagree \c R \c \lagree \; = \; R \c \lagree \),
which again uses two occurrences of $R$.
The algebraic formulations are attractive, but recall the beautiful idea of correctness proof as a chain of refinements 
\[ spec \sps design \sps \ldots \sps prog  \]
This requires the specification to itself be a term in the algebra, rather than an (in)equation between terms.

Before proceeding to investigate this issue, we recall the well known fact that 
possibilistic noninterference is not closed under refinement of trace sets~\cite{Jacob88}.
Consider $hi,lo$ ranging over bits, so we can write pairs compactly, and consider the set of traces 
\( \{ 
(00,00), \underline{(00,01)}, 
(01,00), \underline{(01,01)}, 
\underline{(10,10)}, (10,11), 
\underline{(11,10)}, (11,11) 
\} \)
It satisfies possibilistic noninterference, but if we remove the underlined pairs the result does not; in fact
the result copies $hi$ to $lo$.

In the rest of this paper, we focus on deterministic noninterference, $\NI$ for short.
It has been advocated as a good notion for security~\cite{RoscoeWW1994} %
and it serves our purposes as an example.

\paragraph{A signature and its relational model.}


To investigate how $\NI$ and other non-trace properties may be expressed and used in refinement chains,
it is convenient to focus on a specific signature, the simple imperative language over given atoms (ranged over by $atm$) and boolean expressions (ranged over by $b$).
\begin{equation}\label{eq:signature}
c ::= atm \mid \skipc \mid c;c \mid c \choice c \mid \ifc{b}{c}{c} \mid \whilec{b}{c} 
\end{equation}
For expository purposes we refrain from decomposing the conditional and iteration constructs in terms of choice ($\choice$) and assertions.
That decomposition would be preferred in a more thorough investigation of algebraic laws, and it is evident in the semantic definitions to follow.

Assume that for each $atm$ is given a relation $\meansR{atm}:\Sta\relto\Sta$,
and for each boolean expression $b$ is given a coreflexive relation 
$\meansR{b}:\Sta\relto\Sta$.  That is, $\meansR{b}$ is a subset of the identity relation $id_\Sta$ on $\Sta$.
For non-atom commands $c$ the relational semantics $\meansR{c}$ is defined in Fig.~\ref{fig:relsem}.
The fixpoint for loops\footnote{\label{fn:tailrec}
  It is well known that loops are expressible in terms of recursion:
  $\whilec{b}{c}$ can be expressed as $\mu X . (b;c;X \union \neg b)$
  and this is the form we use in semantics.  
  A well known law is $\mu X . (b;c;X \union \neg b) =  \mu X . (b;c;X \union \skipc);\neg b$
  which factors out the termination condition.}
is in $\Sta\relto\Sta$, ordered by $\sbs$ with least element $\emptyset$.

\begin{figure}[t]
\begin{normalsize}
\[
\begin{array}{lcl}
\meansR{\skipc}          &\; = \;& id_\Sta \\
\meansR{c\, ; d}           & = & \meansR{c} \c \meansR{d} \\
\meansR{c \choice d}           & = & \meansR{c} \union \meansR{d} \\
\meansR{ \ifc{b}{c}{d} } & = &  \meansR{b} \c \meansR{c} \:\union\: \meansR{\neg b} \c \meansR{d} \\
\meansR{ \whilec{b}{c} } & = & \lfp \F \\ 
               &\multicolumn{2}{l}{ 
                 \mbox{where } \F: (\Sta\relto\Sta)\to(\Sta\relto\Sta) \mbox{ is defined }} \\
               &\multicolumn{2}{l}{ \F R = \meansR{b} \c \meansR{c} \c R \:\union\: \meansR{\neg b} }
\end{array}
\]
\end{normalsize}
\caption{Relational semantics $\meansR{c} \in \Sta\relto\Sta$,
with $\meansR{atm}$ assumed to be given.
}
\label{fig:relsem}
\end{figure}

The language goes beyond ordinary programs, in the sense that atoms are allowed to be unboundedly nondeterministic.  They are also allowed to be partial; coreflexive atoms serve as assume and assert statements.
Other ingredients are needed for a full calculus of specifications, but here our aim is to sketch ideas that merit elaboration in a more comprehensive theory.  

\section{Programs as forward predicate transformers}\label{sec:trans}

Here is yet another way to specify $\NI$ for a relation $R$:
\[ \all{p\in\pset\Sta}{ \Agrl(p) \imp \Agrl(\dirimg{R}p) } \]
where $\Agrl$ says that all elements of $p$ agree on $lo$:
\[ 
\Agrl(p) \quad\mbox{iff}\quad \all{\sigma,\tau}{\sigma\in p \land \tau\in p \imp \sigma\lagree\tau} 
\] 
As with the preceding (in)equational formulations, like (\ref{eq:sim}), this is not directly applicable as the specification in a refinement chain, but it does hint that escalating to sets of states may be helpful.
Note that $R$ occurs just once in the condition.

Weakest-precondition predicate transformers are a good model for programming algebra:
Monotonic functions $\pset\Sta\to\pset\Sta$ can model total correctness specifications with both angelic and demonic nondeterminacy. 
In this paper we use transformers to model programs in the forward direction.

For boolean expression $b$ we define 
\( \meansT{b} = \dirimg{ \meansR{b} } \)
so that $\meansT{b}$ is a filter:
$x$ is in $\meansT{b} p$ iff $x\in p$ and $b$ is true of $x$.
The transformer semantics is in Fig.~\ref{fig:transem}.
For loops, the fixpoint is for the aforementioned $\bot$ and $\rt$.

\begin{figure}[t]
\begin{normalsize}
\[
\begin{array}{lcl}
\meansT{atm}             &\; = \;& \dirimg{\meansR{atm}} \\
\meansT{\skipc}          & = & id_{\pset\Sta} \\
\meansT{c\, ; d}           & = & \meansT{c} \c \meansT{d} \\
\meansT{c \choice d}      & = & \meansT{c} \join \meansT{d} \\
\meansT{ \ifc{b}{c}{d} } & = &  \meansT{b} \c \meansT{c} \:\join\: \meansT{\neg b} \c \meansT{d} \\
\meansT{ \whilec{b}{c} } & = & \lfp \G  \\ 
               &\multicolumn{2}{l}{ 
         \mbox{where } \G: (\pset\Sta\to\pset\Sta)\to(\pset\Sta\to\pset\Sta) \mbox{ is defined }} \\
               &\multicolumn{2}{l}{ 
                         \G \phi = \meansT{b}\c\meansT{c}\c\phi \join \meansT{\neg b} }
\end{array}
\]
\end{normalsize}
\caption{Transformer semantics  $\protect\meansT{c} \in \pset\Sta\to\pset\Sta$.
}\label{fig:transem}
\end{figure}

\paragraph{Linking transformer with relational.}

The transformer model may support a richer range of operators than the relational one,
but for several reasons it is important to establish their mutual consistency on a common set of operators~\cite{HoareL74,HoareHe}.
A relation can be recovered from its direct image, see (\ref{eq:relRecover}), so the following is a strong link.

\begin{proposition}\label{prop:reltran}
For all $c$ in the signature, $\dirimg{\meansR{c}} = \meansT{c}$.
\end{proposition}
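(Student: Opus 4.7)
The plan is to prove this by structural induction on $c$, relying on the stated algebraic properties of the direct-image operator $\dirimg{-}$, with the loop case handled by fixpoint fusion.

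For the base cases, $\meansT{atm} = \dirimg{\meansR{atm}}$ holds by definition, and for $\skipc$ the identity $\dirimg{id_\Sta} = id_{\pset\Sta}$ (functoriality of $\dirimg{-}$) gives the result. The cases of sequence and (demonic) choice are immediate from the distribution laws $\dirimg{R\c S} = \dirimg{R}\c\dirimg{S}$ and $\dirimg{R \union S} = \dirimg{R}\join\dirimg{S}$, applied to the inductive hypothesis. The conditional case combines these with the definitional fact $\meansT{b} = \dirimg{\meansR{b}}$ (and likewise for $\neg b$).

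The only non-routine case is the loop. Here I would invoke fixpoint fusion (\ref{eq:fusion}) with $f := \F$, $g := \G$ and $h := \dirimg{-}$, viewed as a function $(\Sta\relto\Sta) \to (\pset\Sta\to\pset\Sta)$. The commutation hypothesis to check is $\dirimg{\F R} = \G\,\dirimg{R}$ for all $R$, and this follows by a short calculation:
\begin{calcu}
\fln{\dirimg{\F R}}
\fln{= \dirimg{\meansR{b}\c\meansR{c}\c R \,\union\, \meansR{\neg b}}}
\fln{= \dirimg{\meansR{b}}\c\dirimg{\meansR{c}}\c\dirimg{R} \,\join\, \dirimg{\meansR{\neg b}}}
\fln{= \meansT{b}\c\meansT{c}\c\dirimg{R} \,\join\, \meansT{\neg b}}
\fln{= \G\,\dirimg{R},}
\end{calcu}
using functoriality and distribution over union, together with the inductive hypothesis $\dirimg{\meansR{c}} = \meansT{c}$ (for the body) and the definitional equations for $\meansT{b}$ and $\meansT{\neg b}$.

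To apply fusion I also need $\dirimg{-}$ to be strict and continuous with respect to the relevant orders. Strictness is $\dirimg{\emptyset} = \bot$, which is noted in the preliminaries. Continuity (indeed universal disjunctivity) is the stated fact that $\dirimg{-}$ sends arbitrary unions of relations to pointwise joins of the images. With these in hand, fusion yields $\dirimg{\lfp \F} = \lfp \G$, i.e., $\dirimg{\meansR{\whilec{b}{c}}} = \meansT{\whilec{b}{c}}$, completing the induction. The main potential obstacle is purely bookkeeping: making sure the orders on $\Sta\relto\Sta$ and $\pset\Sta\to\pset\Sta$ line up so that the version of fusion cited (Theorem~3 of~\cite{Cousot02}, for Kleene approximation) applies — but since both least fixpoints are built by Kleene iteration from the bottom element and $\dirimg{-}$ is strict and continuous, this is straightforward.
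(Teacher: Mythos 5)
Your proposal is correct and follows essentially the same route as the paper: structural induction with the base and composite cases discharged by the functoriality and distribution laws of $\dirimg{-}$, and the loop case by fixpoint fusion with $h := \dirimg{-}$ and the commutation condition $\dirimg{\F R} = \G\dirimg{R}$ verified by the same short calculation. Your added remarks on strictness and continuity of $\dirimg{-}$ make explicit the side conditions the paper leaves implicit when invoking fusion.
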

\begin{proof}
By induction on $c$. 
\begin{itemize}
\item
$\skipc$: $\dirimg{\meansR{\skipc}} = \dirimg{id_{\Sta}} = id_{\pset\Sta\to\pset\Sta} = \meansT{\skipc} $
by definitions and $\dirimg{-}$ law.

\item
$atm$: $\dirimg{\meansR{atm}} = \meansT{atm}$ by definition.

\item
$c ; d$: $\dirimg{\meansR{c; d}}
= \dirimg{\meansR{c}\c\meansR{d}}
= \dirimg{\meansR{c}}\c\dirimg{\meansR{d}}
= \meansT{c}\c\meansT{d}
= \meansT{c ; d}$
by definitions, $\dirimg{-}$ laws, and induction hypothesis.

\item
$c\choice d$: $\dirimg{\meansR{c\choice d}}
= \dirimg{\meansR{c}\union\meansR{d}}
= \dirimg{\meansR{c}}\join\dirimg{\meansR{d}}
= \meansT{c}\join\meansT{d}
= \meansT{c\choice d}$ 
by definitions, $\dirimg{-}$ laws, and induction hypothesis.

\item
$\ifc{b}{c}{d}$:
$\dirimg{\meansR{\ifc{b}{c}{d}}}
= \dirimg{\meansR{b}\c\meansR{c}\union\meansR{\neg b}\c\meansR{d}}
= \dirimg{\meansR{b}}\c\dirimg{\meansR{c}}\join\dirimg{\meansR{\neg b}}\c\dirimg{\meansR{d}}
= \meansT{b}\c\dirimg{\meansR{c}}\join\meansT{\neg b}\c\dirimg{\meansR{d}}
= \meansT{b}\c\meansT{c}\join\meansT{\neg b}\c\meansT{d}
= \meansT{\ifc{b}{c}{d}}
$
by definitions, $\dirimg{-}$ laws, and induction hypothesis.
\item
$\whilec{b}{c}$:  To prove 
$\dirimg{\meansR{\whilec{b}{c}}} = \meansT{\whilec{b}{c}}$,
unfold the definitions to 
$\dirimg{\lfp \F} = \lfp \G$, where $\F,\G$ are defined in Figs~\ref{fig:relsem} and~\ref{fig:transem}.
This follows by fixpoint fusion, taking $h$ in  (\ref{eq:fusion}) to be $\dirimg{-}$
so the antecedent to be proved is 
$\all{R}{ \dirimg{ \F R } = \G\dirimg{R} }$.  
Observe for any $R$:
\begin{calcushort}
\flns{    \dirimg{\F R} }
\plns{=}{ \dirimg{ \meansR{b} \c \meansR{c} \c R \union \meansR{\neg b}} }{def $\F$}


\plns{=}{ \dirimg{ \meansR{b}} \c \dirimg{\meansR{c}} \c \dirimg{R } \join \dirimg{\meansR{\neg b}} }{$\dirimg{-}$ distributes over $\c$ and $\union$}

\plns{=}{ \meansT{b} \c \dirimg{\meansR{c}} \c \dirimg{R } \join \meansT{\neg b} }{def $\meansT{b}$}

\plns{=}{ \meansT{b} \c \meansT{c} \c \dirimg{R} \join \meansT{\neg b} }{induction hypothesis}

\plns{=}{ \G\dirimg{R} }{def $\G$}
\end{calcushort}
\end{itemize}
\qed
\end{proof}
Subsets of $\pset \Sta \to \pset \Sta$, such as transformers satisfying  Dijkstra's healthiness conditions,
validate stronger laws than the full set of (monotonic) transformers.
Healthiness conditions can be expressed by inequations,
such as the determinacy inequation $\converse{R}\c R \sbs id$, 
and used as antecedents in algebraic laws.
Care must be taken with joins: not all subsets are closed under pointwise union.
Pointwise union does provide joins in the set of all transformers and also in the set of all universally disjunctive transformers.

In addition to transformers as weakest preconditions~\cite{Back:book}, 
another similar model is multirelations which are attractive in maintaining a pre-to-post direction~\cite{MartinCR07}.
These are all limited to trace properties, though, so we proceed in a different direction.

\section{Programs as h-transformers}\label{sec:htrans}

Given $R:A\relto B$, the image $\dirimg{R}$ is a function and functions are relations, so the direct image can be taken:
\( \dirimg{\dirimg{R}} : \ppset A \to \ppset B \)
where $\ppset A$ abbreviates $\pset(\pset A)$.
In this paper, monotonic functions of this type are called \dt{h-transformers}, in a nod to hyper terminology.

The underlying relation can be recovered by two applications of (\ref{eq:relRecover}):
\[ R = sglt \c sglt \c \dirimg{\dirimg{R}} \c \Ni \c \Ni \]
More to the point, a quantifier-free  formulation of $\NI$ is now in reach.
Recall that we have $R \in \NI$ iff 
$\all{p\in\pset\Sta}{ \Agrl(p) \imp \Agrl(\dirimg{R} p) }$.
This is equivalent to 
\begin{equation}\label{eq:NI}
\dirimg{\dirimg{R}} \A \sbs \A
\end{equation}
where the set of sets $\A$ is defined by $\A = \{ p \mid \Agrl(p) \}$.
This is one motivation to investigate $\ppset \Sta \to \ppset \Sta$ as a model, rather than 
$\pset(\Sta\relto\Sta)$ which is the obvious way to embody the idea that a program is a trace set and a property is a set of programs.

In the following we continue to write $\join$ and $\rt$ for the pointwise join and pointwise  order on $\ppset\Sta\to\ppset\Sta$. Please note the order is defined in terms of set inclusion at the outer layer of sets and is independent of the order on $\pset\Sta$.  
Define $\pbot = \dirimg{\bot}$ and note that 
$\pbot\emptyset = \emptyset$ and $\pbot\Q=\{\emptyset\}$ for $\Q\neq\emptyset$.

\paragraph{Surprises.}

For semantics using h-transformers, some obvious guesses work fine but others do not.
The semantics in Fig.~\ref{fig:powtransem} 
uses operators $\ijoin$, $\hycond{b}$ and $\Dpset$ which will be explained in due course.  
For boolean expressions we simply lift by direct image, defining 
\( \meansH{b} = \dirimg{\meansT{b}} \).
The same for command atoms, so the semantics of $atm$ is derived from the given $\meansR{atm}$.

\begin{figure}[t]
\begin{normalsize}
\[
\begin{array}{lcl}
\meansH{atm}             &\; = \;& \dirimg{\meansT{atm}} \\
\meansH{\skipc}          & = & id \\ 
\meansH{c\, ; d}           & = & \meansH{c} \c \meansH{d} \\
\meansH{c \choice d}      & = & \meansH{c} \ijoin \meansH{d} \\ 
\meansH{ \ifc{b}{c}{d} } & = &  \meansH{c} \hycond{b} \meansH{d} \\
\meansH{ \whilec{b}{c} } & = & \lfp \H \\
               &\multicolumn{2}{l}{ 
         \mbox{where } \H: (\Dpset(\pset\Sta)\to\Dpset(\pset\Sta))\to (\Dpset(\pset\Sta)\to\Dpset(\pset\Sta))
                           \mbox{ is defined } } \\
               &\multicolumn{2}{l}{ \H \Phi = \meansH{c}\c\Phi \hycond{b} \meansH{\skipc} }
\end{array}
\]
\end{normalsize}
\caption{H-transformer semantics  $\protect\meansH{c} \in \Dpset(\pset\Sta)\to\Dpset(\pset\Sta)$.
}\label{fig:powtransem}
\end{figure}

The analog of Proposition~\ref{prop:reltran} is that 
for all $c$ in the signature, $\dirimg{\meansT{c}} = \meansH{c}$,
allowing laws valid in relational semantics to be lifted to h-transformers.
Considering some cases suggests that this could be proved by induction on $c$:

\begin{itemize}
\item $\skipc$: $\dirimg{\meansT{\skipc}} = \dirimg{id_{\pset\Sta}} = \meansH{\skipc} $
by definitions and using that $\dirimg{-}$ preserves identity.

\item $atm$: $\dirimg{\meansT{atm}} = \meansH{atm}$ by definition.

\item $c ; d$: $\dirimg{\meansT{c\c d}}
= \dirimg{\meansT{c}\c\meansT{d}}
= \dirimg{\meansT{c}}\c\dirimg{\meansT{d}}
= \meansH{c}\c\meansH{d}
= \meansH{c ; d}$
by definitions, distribution of $\dirimg{-}$ over $\c$, and putative induction hypothesis. 

\end{itemize}
These calculations suggest we may succeed with this obvious guess:
\begin{equation}\label{eq:badcond}
\meansH{ \ifc{b}{c}{d} }  =   \meansH{b} \c \meansH{c} \:\join\: \meansH{\neg b} \c \meansH{d} 
\end{equation}
The induction hypothesis would give 
\( \meansH{ \ifc{b}{c}{d} }  =   \dirimg{\meansT{b} \c \meansT{c}} \join \dirimg{\meansT{\neg b} \c \meansT{d}} \).
On the other hand,
\( \dirimg{\meansT{ \ifc{b}{c}{d} }} = \dirimg{\meansT{b} \c \meansT{c} \join \meansT{\neg b} \c \meansT{d}} \).
Unfortunately these are quite different because the joins are at different levels. 
In general, for $\phi$ and $\psi$ of type $\pset\Sta\to\pset\Sta$ and $\Q\in\ppset\Sta$ we have
$\dirimg{\phi \join \psi}\Q = \{ \phi\, p \union \psi\, p \mid p\in \Q \}$
whereas 
\( (\dirimg{\phi}\join\dirimg{\psi})\Q = \{ \phi\, p \mid p\in \Q \} \union \{ \psi\, p \mid p \in \Q \} \).   
Indeed, the same discrepancy would arise if we define
$\meansH{c \choice d} = \meansH{c} \join \meansH{d}$.

At this point one may investigate notions of ``inner join'', but for expository purposes we proceed to consider a putative definition for loops.
Following the pattern for relational and transformer semantics, an obvious guess is
\begin{equation}\label{eq:badloop}
\meansH{\whilec{b}{c}} = \lfp \K 
\mbox{ where }
\K \Phi = \meansH{b}\c\meansH{c}\c\Phi \join \meansH{\neg b}
\end{equation}
Consider this program: $\whilec{x<4}{x:=x+1}$. We can safely assume
$\meansH{x<4}$ is $\dirimg{\meansT{x<4}}$ and 
$\meansH{x:=x+1}$ is $\dirimg{\meansT{x:=x+1}}$.
As there is a single variable, we can represent a state by its value, for example $\{2,5\}$ is a set of two states.
Let us work out $\meansH{\whilec{x<4}{x:=x+1}} \{\{2,5\}\}$.
Now $\meansH{\whilec{x<4}{x:=x+1}}$ is the limit of the chain $\K^i\pbot$ where $\K^i$ means $i$ applications of $\K$.
Note that for any $\Phi$ and $i>0$, 
\[ \K^i \Phi = 
   \begin{array}[t]{l}
       (\meansH{x<4}\c\meansH{x:=x+1})^i \c \Phi 
         \; \join \\
       (\join j :: 0\leq j < i \aftqua 
                  (\meansH{x<4}\c\meansH{x:=x+1})^j \c\meansH{\neg x<4} )
   \end{array}\]
Writing $\Q_i$ for $\K^i\pbot \{\{2,5\}\}$ one can derive 
\[\begin{array}{l}
\Q_0 = \{\emptyset\} \\
\Q_1 = \{\emptyset\} \union \{\{5\}\} = \{ \emptyset,\{5\}\} \\
\Q_2 = \{\emptyset\} \union \{\emptyset\}\union\{\{5\}\} = \{ \emptyset,\{5\}\} \\
\Q_3 = \{\emptyset\} \union \{\emptyset\}\union\{\{4\}\}\union\{\{5\}\} = \{ \emptyset,\{4\},\{5\}\}  
\end{array}
\]
at which point the sequence remains fixed.
As in the case of conditional (\ref{eq:badcond}), the result is not consistent with the underlying semantics: 
\[ \meansT{\whilec{x<4}{x:=x+1}} \{2,5\} = \{4,5\} \]
The result should be $\{\{4,5\}\}$ if we are to have the analog of 
Proposition~\ref{prop:reltran}.

A plausible inner join is $\otimes$ defined
by $(\Phi\otimes\Psi)\Q = \{ r\union s \mid \some{q\in\Q}{r\in\Phi\{q\}\land s\in\Psi\{q\}} \}$.
This can be used to define a semantics of $\choice$
as well as semantics of conditional and loop;
the resulting constructs are $\rt$-monotonic and enjoy other nice properties.  

Indeed, using $\otimes$ in place of $\join$ in (\ref{eq:badloop}),
we get $\K^3\pbot\{\{2,5\}\} = \{\{4,5\}\}$, which is exactly the lift of the transformer semantics.
There is one serious problem: $\K$ fails to be increasing.
In particular, $\pbot\not\rt\K\pbot$; for example $\pbot\{\{2,5\}\} = \{\emptyset\}$ but
$\H\pbot\{\{2,5\}\} = \{\{5\}\}$.
While this semantics merits further study, we leave it aside because we aim to use fixpoint fusion results that rely on Kleene approximation: This requires $\pbot\rt\K\pbot$ in order to have an ascending chain, and the use of $\pbot$ so that $\dirimg{-}$ is strict. 

\paragraph{A viable solution.}

Replacing singleton by powerset in the definition of $\otimes$, for
any h-transformers $\Phi,\Psi : \ppset\Sta \to \ppset\Sta$ we define the inner join $\ijoin$ by 
\[ 
(\Phi\ijoin\Psi)\Q
= \{ r\union s \mid \some{p\in\Q}{r\in\Phi(\pset{\,p})\land s\in\Psi(\pset{\,p})} \}
\]
For semantics of conditionals, it is convenient to define, for boolean expression $b$, 
this operator on h-transformers:
$\Phi\hycond{b}\Psi = \meansH{b}\c\Phi \ijoin \meansH{\neg b}\c\Psi$.
It satisfies 
\begin{equation}\label{eq:defhycond}
 (\Phi\hycond{b}\Psi)\Q = 
\{ r \union s \mid \some{p\in \Q}{
   r \in \Phi(\pset(\meansT{b}p)) \land 
   s \in \Psi(\pset(\meansT{\neg b}p)) } \} 
\end{equation}
because $\meansH{b}(\pset\,p) = \dirimg{\meansT{b}}(\pset\,p) = \pset(\meansT{b} p)$).
These operators are used in Fig.~\ref{fig:powtransem} for semantics of conditional and loop.

It is straightforward to prove $\ijoin$ is monotonic:
$\Phi\rt\Phi'$ and $\Psi\rt\Psi'$ imply $\Phi\ijoin\Psi \rt \Phi'\ijoin\Psi'$.
It is also straightforward to prove 
\begin{equation}\label{eq:joinijoin}
\dirimg{\phi\join\psi} \rt \dirimg{\phi}\ijoin\dirimg{\psi}
\end{equation}
but in general equality does not hold, so we focus on $\hycond{\args}$.

\begin{lemma}
For any $b$, $\hycond{b}$ is monotonic:
$\Phi\rt\Phi'$ and $\Psi\rt\Psi'$ imply
$\Phi\hycond{b}\Psi \rt \Phi'\hycond{b}\Psi'$.
\end{lemma}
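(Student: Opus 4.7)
The plan is to reduce monotonicity of $\hycond{b}$ directly to the already-stated monotonicity of $\ijoin$, using only the definition $\Phi\hycond{b}\Psi = \meansH{b}\c\Phi \ijoin \meansH{\neg b}\c\Psi$ and the fact that h-transformers are by definition monotonic functions.

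First I would establish the auxiliary fact that for any fixed h-transformer $\Xi$, pre-composition $\Phi \mapsto \Xi\c\Phi$ preserves the pointwise order $\rt$. Indeed, if $\Phi\rt\Phi'$ then for every $\Q$ we have $\Phi\,\Q \sbs \Phi'\,\Q$, and since $\Xi$ is monotonic (being an h-transformer) we get $\Xi(\Phi\,\Q) \sbs \Xi(\Phi'\,\Q)$, i.e., $(\Xi\c\Phi)\,\Q \sbs (\Xi\c\Phi')\,\Q$. Applying this with $\Xi := \meansH{b}$ and then with $\Xi := \meansH{\neg b}$ yields $\meansH{b}\c\Phi \rt \meansH{b}\c\Phi'$ and $\meansH{\neg b}\c\Psi \rt \meansH{\neg b}\c\Psi'$ from the hypotheses.

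The conclusion then follows by one application of monotonicity of $\ijoin$ (stated in the paragraph above the lemma): from $\meansH{b}\c\Phi \rt \meansH{b}\c\Phi'$ and $\meansH{\neg b}\c\Psi \rt \meansH{\neg b}\c\Psi'$ we infer
\[
\Phi\hycond{b}\Psi \;=\; \meansH{b}\c\Phi \,\ijoin\, \meansH{\neg b}\c\Psi \;\rt\; \meansH{b}\c\Phi' \,\ijoin\, \meansH{\neg b}\c\Psi' \;=\; \Phi'\hycond{b}\Psi'.
\]

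There is no real obstacle: the whole argument is structural and turns on the two ingredients already in place, namely monotonicity of the individual h-transformers $\meansH{b},\meansH{\neg b}$ and monotonicity of $\ijoin$. The only subtlety worth flagging is that one must use monotonicity of $\meansH{b}$ and $\meansH{\neg b}$ themselves (which is immediate since they are direct images, hence universally disjunctive and in particular monotonic); if one instead tried to argue via the explicit formula (\ref{eq:defhycond}), the proof would be longer and would obscure why monotonicity reduces cleanly to the lemma-like property of $\ijoin$.
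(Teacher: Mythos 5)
Your proof is correct, and it is essentially the first of the two justifications the paper itself offers: the paper's proof remarks that the lemma ``follows by monotonicity of $\ijoin$'' and then writes out the alternative argument, which unfolds the explicit characterization (\ref{eq:defhycond}) and pushes $\Phi\rt\Phi'$, $\Psi\rt\Psi'$ through the existential. You elaborate the $\ijoin$ route instead, which is a perfectly good decomposition. One slip worth fixing: the paper composes in the forward direction, $(f\c g)x = g(f\,x)$, so $(\Xi\c\Phi)\Q = \Phi(\Xi\,\Q)$, not $\Xi(\Phi\,\Q)$. Your auxiliary fact that $\Phi\mapsto\meansH{b}\c\Phi$ preserves $\rt$ is true, but it needs no monotonicity of $\meansH{b}$ at all: $(\meansH{b}\c\Phi)\Q = \Phi(\meansH{b}\Q) \sbs \Phi'(\meansH{b}\Q) = (\meansH{b}\c\Phi')\Q$, directly from $\Phi\rt\Phi'$ instantiated at the argument $\meansH{b}\Q$. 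The inclusion you actually derive, $\Xi(\Phi\,\Q)\sbs\Xi(\Phi'\,\Q)$, instead justifies order-preservation of the other composite $\Phi\mapsto\Phi\c\Xi$, which is not the map appearing in the definition of $\hycond{b}$. With the convention straightened out, the remainder --- a single application of the stated monotonicity of $\ijoin$ --- goes through exactly as you say.
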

\begin{proof}
Keep in mind this is $\rt$ at the outer level: 
$\Phi\rt\Phi'$ means $\all{\Q}{\Phi\Q \sbs \Phi'\Q}$ (more sets, not bigger sets, if you will).
This follows by monotonicity of $\ijoin$, or using characterization (\ref{eq:defhycond})
we have 
\[ r\union s \in (\Phi\hycond{b}\Psi)\Q
\quad\mbox{ iff }\quad 
\some{q\in\Q}{ r\in\Phi(\pset(\meansT{b}q)) \land 
                    s\in\Psi(\pset(\meansT{\neg b}q)) } 
\]
which implies 
$\some{q\in\Q}{ r\in\Phi'(\pset(\meansT{b}q)) \land 
                    s\in\Psi'(\pset(\meansT{\neg b}q)) }$
by $\Phi\rt\Phi'$ and $\Psi\rt\Psi'$.
\qed
\end{proof}

With $\meansH{ \ifc{b}{c}{d} }$ defined as in Fig.~\ref{fig:powtransem} we have the following refinement.

\begin{lemma}\label{lem:hcond}
$\dirimg{\meansT{ \ifc{b}{c}{d} }} \rt \meansH{ \ifc{b}{c}{d}}$
provided that 
$\dirimg{\meansT{c}} \rt \meansH{c}$ and $\dirimg{\meansT{d}} \rt \meansH{d}$.
\end{lemma}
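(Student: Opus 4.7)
The plan is to unfold both sides on an arbitrary $\Q \in \Dpset(\pset\Sta)$ and exhibit a pointwise inclusion, using the hypotheses only at the very end.

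First I would unfold the left-hand side. By the definitions in Figures~\ref{fig:transem} and the distributivity of $\dirimg{-}$ over $\join$, for any $\Q$:
\[
\dirimg{\meansT{\ifc{b}{c}{d}}}\Q \;=\; \{\, \meansT{c}(\meansT{b}p) \union \meansT{d}(\meansT{\neg b}p) \mid p \in \Q \,\}.
\]
On the other side, $\meansH{\ifc{b}{c}{d}} = \meansH{c}\hycond{b}\meansH{d}$, and by (\ref{eq:defhycond}),
\[
(\meansH{c}\hycond{b}\meansH{d})\Q \;=\; \{\, r\union s \mid \some{p\in\Q}{ r \in \meansH{c}(\pset(\meansT{b}p)) \land s \in \meansH{d}(\pset(\meansT{\neg b}p))}\,\}.
\]

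Next, I would pick an arbitrary element of the LHS applied to $\Q$, namely $\meansT{c}(\meansT{b}p) \union \meansT{d}(\meansT{\neg b}p)$ for some $p\in\Q$, and take $r := \meansT{c}(\meansT{b}p)$ and $s := \meansT{d}(\meansT{\neg b}p)$ as witnesses for the RHS. The key observation is that $\meansT{b}p \in \pset(\meansT{b}p)$ since $\meansT{b}p \sbs \meansT{b}p$, hence
\[
\meansT{c}(\meansT{b}p) \in \dirimg{\meansT{c}}(\pset(\meansT{b}p)).
\]
By the hypothesis $\dirimg{\meansT{c}} \rt \meansH{c}$, this element lies in $\meansH{c}(\pset(\meansT{b}p))$, so $r$ meets the required condition. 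An identical argument using $\dirimg{\meansT{d}} \rt \meansH{d}$ takes care of $s$. This shows every element of $\dirimg{\meansT{\ifc{b}{c}{d}}}\Q$ lies in $\meansH{\ifc{b}{c}{d}}\Q$, which is exactly $\rt$ at the outer level.

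There is no real obstacle here: the argument is essentially a matter of bookkeeping, tracking where the joins happen. The only subtlety worth flagging is the asymmetry between $\join$ (pointwise) and $\ijoin$ (inner), responsible for the fact that (\ref{eq:joinijoin}) is only a $\rt$-inequation. That same asymmetry is what forces this lemma to be an inequation rather than an equation, so one should not be tempted to strengthen the conclusion to equality.
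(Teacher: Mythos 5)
Your proof is correct and is essentially the paper's own argument unfolded to the element level: the paper instead chains the point-free inequation (\ref{eq:joinijoin}) with distribution of $\dirimg{-}$ over $\c$ and monotonicity of $\ijoin$, and your witness $\meansT{b}p \in \pset(\meansT{b}p)$ is precisely the content of (\ref{eq:joinijoin}). One terminological caution: your opening identity for $\dirimg{\meansT{\ifc{b}{c}{d}}}\Q$ is right, but it follows from the definition of the direct image of a function applied pointwise (as the paper notes, $\dirimg{\phi\join\psi}\Q = \{\phi\,p \union \psi\,p \mid p\in\Q\}$), not from ``distributivity of $\dirimg{-}$ over $\join$'' --- the central subtlety of this section is that $\dirimg{-}$ does \emph{not} distribute over the pointwise join of transformers.
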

\begin{proof}  

  {\  } \vspace*{-2ex}  

\begin{calcushort}
\flns{ \dirimg{\meansT{ \ifc{b}{c}{d} }} }
\plns{=}{ \dirimg{ \meansT{b}\c\meansT{c}\join\meansT{\neg b}\c\meansT{d} }}{semantics}
\plns{\rt}{ \dirimg{\meansT{b}\c\meansT{c}}\ijoin\dirimg{\meansT{\neg b}\c\meansT{d} }
    }{by (\ref{eq:joinijoin})}
\plns{=}{ \meansH{b}\c\dirimg{\meansT{c}}\ijoin\meansH{\neg b}\c\dirimg{\meansT{d} }
    }{distribute $\dirimg{-}$ over $\c$, semantics}         
\plns{\rt}{ \meansH{b}\c\meansH{c}\ijoin\meansH{\neg b}\c\meansH{d}}{assumption, monotonicity}
\plns{=}{ \meansH{ \ifc{b}{c}{d} }}{semantics, def of $\hycond{b}$ from $\ijoin$}
\end{calcushort}
\qed
\end{proof}

This result suggests that we might be able to prove $\dirimg{\meansT{c}}\rt\meansH{c}$ for all $c$, but that would be a weak link between the transformer and h-transformer semantics.  A stronger link can be forged as follows.

We say $\Q\in\ppset\Sta$ is \dt{subset closed} iff $\Q = \ssc\Q$
where the subset closure operator $\ssc$ is defined by $p\in\ssc\Q$ iff $\some{q\in\Q}{p\subseteq q}$.
For example, the set $\A$ used in (\ref{eq:NI}) is subset closed.
Observe that $\ssc = \dirimg{\rsps}$.

\begin{lemma}\label{lem:imghycond}
For transformers $\phi,\psi:\pset\Sta\to\pset\Sta$ and condition $b$,
if $\Q=\ssc \Q$ then
\( \dirimg{ \meansT{b}\c \phi \join \meansT{\neg b}\c \psi }\Q \;=\; (\dirimg{\phi} \hycond{b} \dirimg{\psi})\Q \).
\end{lemma}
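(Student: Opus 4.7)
The plan is to unfold both sides using the definitions of $\dirimg{-}$, $\join$, and $\hycond{b}$ (equation (\ref{eq:defhycond})), reducing the claim to set equality between two explicitly described sets, and then prove the two inclusions separately. Specifically, the left-hand side unfolds to
\[
\{\, \phi(\meansT{b}\, p) \union \psi(\meansT{\neg b}\, p) \mid p \in \Q \,\},
\]
while the right-hand side, using that $\dirimg{\phi}(\pset\,q) = \{\phi(p') \mid p' \sbs q\}$ and similarly for $\psi$, unfolds to
\[
\{\, \phi(p') \union \psi(p'') \mid \some{p\in\Q}{ p' \sbs \meansT{b}\, p \land p'' \sbs \meansT{\neg b}\, p }\,\}.
\]

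The forward inclusion $\sbs$ needs no hypothesis: for any $p \in \Q$, take $p' = \meansT{b}\, p$ and $p'' = \meansT{\neg b}\, p$ as witnesses. The substantive direction is $\sps$, and this is where subset closure of $\Q$ is essential. Given $p \in \Q$ with $p' \sbs \meansT{b}\, p$ and $p'' \sbs \meansT{\neg b}\, p$, I would propose $q := p' \union p''$ as the witness. Since $\meansT{b}$ and $\meansT{\neg b}$ are direct images of coreflexives that partition $id_\Sta$, we have $q \sbs p$, hence $q \in \Q$ by subset closure. The key bookkeeping step is to check that $\meansT{b}\, q = p'$ and $\meansT{\neg b}\, q = p''$: all elements of $p'$ satisfy $b$, all elements of $p''$ satisfy $\neg b$, and $\meansT{b}$ distributes over union as a filter, so $\meansT{b}(p' \union p'') = p' \union \emptyset = p'$, and symmetrically for $\meansT{\neg b}$. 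Then $\phi(p') \union \psi(p'') = \phi(\meansT{b}\, q) \union \psi(\meansT{\neg b}\, q)$ lies in the left-hand side.

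The main obstacle, and the reason the hypothesis $\Q = \ssc\, \Q$ is indispensable, is precisely this witness-construction step: without subset closure, an arbitrary pair $(p', p'')$ bounded above componentwise by some $p \in \Q$ need not correspond to the $b$- and $\neg b$-parts of any single element of $\Q$. The rest is routine unfolding of definitions and the elementary observation that $\meansT{b}$ acts as the identity on sets contained in $\meansT{b}\,p$ and as the empty map on sets contained in $\meansT{\neg b}\,p$.
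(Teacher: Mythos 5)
Your proof is correct and follows essentially the same route as the paper's: both unfold the two sides to the same explicit set descriptions, obtain the easy inclusion by instantiating the subsets as $\meansT{b}\,p$ and $\meansT{\neg b}\,p$, and obtain the reverse inclusion by taking $p'\union p''$ as the new witness in $\Q$ (justified by subset closure) and using the filter property of $\meansT{b}$, $\meansT{\neg b}$ to recover $p'$ and $p''$. No gaps.
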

\begin{proof}
For the LHS, by definitions:
\begin{calcushort}
\flns{    \dirimg{ \meansT{b}\c \phi \join \meansT{\neg b}\c \psi }\Q }
\plns{=}{ \{ r\union s \mid \some{q\in\Q}{ r = \phi(\meansT{b}q) \land s = \psi(\meansT{\neg b}q) } \}
    \hspace*{3em} (*)}{} 
\end{calcushort}
For the RHS, again by definitions:
\begin{calcushort}
\flns{     (\dirimg{\phi} \hycond{b} \dirimg{\psi})\Q }
\plns{=}{ \{ r\union s \mid \some{q\in\Q}{ r \in \dirimg{\phi}(\pset(\meansT{b}q)) \land 
                                          s \in \dirimg{\psi}(\pset(\meansT{\neg b}q)) } \} }{}
\plns{=}{ \{ r\union s \mid \some{q\in\Q}{ \some{t,u}{
                                             t\sbs \meansT{b}q \land 
                                             u\sbs \meansT{\neg b}q \land 
                                             r = \phi t \land 
                                             s = \psi u }} \} 
    \hspace*{3em} (\dagger) }{}
\end{calcushort}
Now $(*)\sbs (\dagger)$ by instantiating $t := \meansT{b}q$ and $u := \meansT{\neg b}q$, so LHS $\sbs$ RHS is proved---as expected, given (\ref{eq:joinijoin}).
If $\Q$ is subset closed, we get $(\dagger)\sbs(*)$ as follows.  
Given $q,t,u$ in $(\dagger)$, let $q' := t\union u$.  
Then $t=\meansT{b}q'$ and  $u=\meansT{\neg b}q$ because $\meansT{b}$ and $\meansT{\neg b}$ are filters.
And $q'\in\Q$ by subset closure.
Taking $q:=q'$ in $(*)$ completes the proof of RHS $\sbs$ LHS.
\qed
\end{proof}

\paragraph{Preservation of subset closure.}

In light of Lemma~\ref{lem:imghycond}, we aim to restrict attention to h-transformers on subset closed sets.
To this end we introduce a few notations.
The subset-closed powerset operator $\Dpset$ is defined on powersets $\pset A$, 
by
\begin{equation}\label{eq:Dpset}
\Q\in\Dpset(\pset A) \quad\mbox{iff}\quad \Q\subseteq \pset A \mbox{ and } \Q=\ssc\Q \mbox{ and } \Q\neq\emptyset
\end{equation}
To restrict attention to h-transformers of type $\Dpset(\pset\Sta)\to\Dpset(\pset\Sta)$ we must show that subset closure is preserved by the semantic constructs.  

For any transformer $\phi$,
define $\PSC \phi$ iff $(\rsps\c \phi) = (\phi\c\rsps)$.  
The acronym is explained by the lemma to follow.
By definitions, the inclusion $(\rsps\c \phi) \sps (\phi\c\rsps)$ is equivalent to 
\begin{equation}\label{eq:psc}
\all{q,r}{ \phi\, q \sps r \imp \some{s}{q\sps s \land \phi\, s = r}}
\end{equation}
Recall from Section~\ref{sec:rel} that the reverse,
$(\rsps\c \phi) \sbs (\phi\c\rsps)$, is monotonicity of $\phi$.

\begin{lemma}\label{lem:psc}
$\PSC\phi$ implies $\dirimg{\phi}$ preserves subset closure.
\end{lemma}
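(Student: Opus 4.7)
The plan is to exploit the point-free characterisation $\PSC\phi \iff \mathord{\rsps}\c \phi = \phi\c\mathord{\rsps}$ together with the identity $\ssc = \dirimg{\rsps}$ already noted in the text, and the functoriality of direct image.

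First, I would apply $\dirimg{-}$ to both sides of the defining equation $\mathord{\rsps}\c \phi = \phi\c\mathord{\rsps}$. Because $\dirimg{-}$ distributes over composition, this yields
\[
\ssc \c \dirimg{\phi} \;=\; \dirimg{\mathord{\rsps}}\c\dirimg{\phi} \;=\; \dirimg{\phi}\c\dirimg{\mathord{\rsps}} \;=\; \dirimg{\phi}\c\ssc.
\]
Thus for any $\Q$ with $\Q=\ssc\Q$ we compute
\(
\dirimg{\phi}\Q = \dirimg{\phi}(\ssc\Q) = \ssc(\dirimg{\phi}\Q),
\)
which is exactly the statement that $\dirimg{\phi}\Q$ is subset closed.

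As a cross-check, I would also sketch the pointwise argument using the elementwise form (\ref{eq:psc}). Take any $r \in \ssc(\dirimg{\phi}\Q)$; then there exist $q\in\Q$ and $r'=\phi\, q$ with $r\sbs r'$, so $\phi\, q \sps r$. By $\PSC\phi$ in form (\ref{eq:psc}) there is $s$ with $q\sps s$ and $\phi\, s = r$. Since $\Q$ is subset closed and $s\sbs q\in\Q$, we have $s\in\Q$, hence $r=\phi\, s \in \dirimg{\phi}\Q$. The reverse inclusion $\dirimg{\phi}\Q \sbs \ssc(\dirimg{\phi}\Q)$ is trivial. This confirms the calculation above and shows exactly where the hypothesis $\PSC\phi$ (as opposed to mere monotonicity) is used: monotonicity alone would only give the inclusion $\ssc\c\dirimg{\phi} \sps \dirimg{\phi}\c\ssc$, which is the wrong direction for propagating subset closure forward through $\dirimg{\phi}$.

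There is no real obstacle here; the proof is essentially unpacking definitions. The only subtlety worth flagging in the write-up is this asymmetry between the two inclusions comprising $\PSC$: the ``monotonicity'' inclusion is automatic, while the other inclusion (which supplies the witness $s$ in (\ref{eq:psc})) is precisely what allows the pre-image of a larger set to be realised as the image of a smaller one, so that subset closure in $\Q$ is transported to subset closure in $\dirimg{\phi}\Q$.
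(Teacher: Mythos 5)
Your point-free calculation is exactly the paper's proof: apply functoriality of $\dirimg{-}$ to the equation $\mathord{\rsps}\c\phi = \phi\c\mathord{\rsps}$ and use $\ssc = \dirimg{\rsps}$ together with $\ssc\Q=\Q$. The pointwise cross-check is correct and a nice addition (it shows only the inclusion $(\rsps\c\phi)\sps(\phi\c\rsps)$ is actually needed), but it does not change the fact that this is essentially the same argument as in the paper.
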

\begin{proof}
For any subset closed $\Q$, $\dirimg{\phi}\Q$ is subset closed because
$\dirimg{\rsps}(\dirimg{\phi}\Q)    
= \dirimg{\phi\c\rsps} \Q 
= \dirimg{\rsps\c\phi} \Q 
= \dirimg{\phi}(\dirimg{\rsps}\Q)
= \dirimg{\phi}\Q
$
using functoriality of $\dirimg{-}$, $\PSC \phi$, and $\ssc\Q = \Q$.
\qed
\end{proof}
It is straightforward to show $\PSC\bot$.
The following is a key fact, but also a disappointment that leads us
away from nondeterminacy.

\begin{lemma}\label{lem:pscR}
If $R$ is a partial function (i.e., $\converse{R}\c R\sbs id$)
then $\PSC\dirimg{R}$.
\end{lemma}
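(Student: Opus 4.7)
The plan is to unfold $\PSC\dirimg{R}$ via the characterization in (\ref{eq:psc}). Since $\dirimg{R}$ is always monotonic, one half of the equality $(\rsps\c\dirimg{R}) = (\dirimg{R}\c\rsps)$ is automatic, and it suffices to produce, for every $q,r$ with $\dirimg{R}q \sps r$, some $s$ with $q\sps s$ and $\dirimg{R}s = r$.

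The witness writes itself: take $s := \{x\in q \mid \some{y}{y\in r \land xRy}\}$. Then $s\sbs q$ by construction, so I only need to show $\dirimg{R}s = r$ as two inclusions. The inclusion $r\sbs \dirimg{R}s$ does not use partial functionality: given $y\in r$, the hypothesis $r\sbs \dirimg{R}q$ produces $x\in q$ with $xRy$, and such an $x$ belongs to $s$ by definition, so $y\in\dirimg{R}s$.

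The main obstacle, and the only place the hypothesis $\converse{R}\c R\sbs id$ enters, is the reverse inclusion $\dirimg{R}s\sbs r$. The subtlety is that membership of $x$ in $s$ is witnessed by \emph{some} $y\in r$ with $xRy$, but to conclude $\dirimg{R}s\sbs r$ I must show that \emph{every} $R$-image of $x$ lies in $r$. This is exactly where partial functionality is used: if $x\in s$ with witness $y\in r$ and $xRy'$ for any $y'$, then $y' \converse{R} x R y$, so $y'(\converse{R}\c R)y$, whence $y'=y\in r$.

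Assembling these observations gives $\dirimg{R}s = r$ with $s\sbs q$, so (\ref{eq:psc}) holds for $\phi := \dirimg{R}$, proving $\PSC\dirimg{R}$. The entire argument is an element-level manipulation; I do not see a purely equational route, precisely because partial functionality is the point at which the algebra of $\dirimg{-}$ (which distributes over union but not over intersection or converse in general) has to be supplemented by a pointwise selection.
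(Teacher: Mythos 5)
Your proof is correct and is essentially the paper's own argument: you choose the same witness $s$ (the paper writes it as $(\dirimg{\converse{R}}r)\intersect q$, which is exactly your set-builder expression) and verify $\dirimg{R}s = r$ using partial functionality in precisely the same place. The only cosmetic difference is that the paper runs the verification as a single chain of equivalences rather than two separate inclusions.
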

\begin{proof}
In accord with (\ref{eq:psc}) we show for any $q,r$ that 
$ r \sbs \dirimg{R} q \imp \some{s\sbs q}{\phi s = r}$.
Suppose $r \sbs \dirimg{R} q$. 
Let $s=(\dirimg{\converse{R}} r)\intersect q$, so for any $x$ we have
$x\in s$ iff $x\in q$ and $\some{y\in r}{x R y}$.
We have $s\sbs q$ and it remains to show $\dirimg{R} s = r$, which holds because for any $y$
\begin{calcushort}
\flns{ y\in\dirimg{R} s }
\plns{\equiv}{ \some{x}{x \in s \land xRy} }{def $\dirimg{-}$}
\plns{\equiv}{ \some{x}{x\in q \land (\some{z}{z\in r \land xRz}) \land xRy }}{def $s$}
\plns{\equiv}{ \some{x}{x\in q \land y \in r \land xRy} }{$R$ partial function}
\plns{\equiv}{ y \in r}{$\impby$ by $r\sbs \dirimg{R}q$ and def $\dirimg{-}$}
\end{calcushort}
\end{proof}

Using dots to show domain and range elements, the diagram on the left 
is an example $R$ such that $\PSC\dirimg{R}$ but $R$ is not a partial function.
The diagram on the right is a relation, the image of which does not satisfy $\PSC$.
\begin{footnotesize}
\begin{center}
      \begin{tikzcd}
         \point \arrow[r,"",dash] 
& \point\\
         \point \arrow[r,"",dash] 
\arrow[ru,"",dash] 
& \point\\
         \point \arrow[ru,"",dash] 
& \point
      \end{tikzcd}
\qquad\qquad\qquad\qquad
      \begin{tikzcd}
         \point \arrow[r,"",dash] \arrow[rd,"",dash] & \point\\
         \point \arrow[r,"",dash] \arrow[ru,"",dash] & \point\\
         \point & \point 
      \end{tikzcd}
\end{center}
\end{footnotesize}

As a consequence of Lemmas~\ref{lem:psc} and~\ref{lem:pscR} we have the following.
\begin{lemma}\label{lem:imgssc}
If $R$ is a partial function then $\dirimg{\dirimg{R}}:\ppset{A}\to\ppset{B}$ preserves subset closure.
\end{lemma}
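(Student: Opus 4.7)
The plan is simply to chain the two preceding lemmas. Instantiate Lemma~\ref{lem:psc} with $\phi := \dirimg{R}$, which is a legitimate transformer $\pset A \to \pset B$ since direct image is monotonic under $\sbs$. The hypothesis $\PSC\phi$ needed for Lemma~\ref{lem:psc} is exactly the conclusion of Lemma~\ref{lem:pscR}, which is available because $R$ is assumed to be a partial function. Combining the two, $\dirimg{\dirimg{R}} = \dirimg{\phi}$ preserves subset closure, which is the desired conclusion.

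No induction, case analysis, or auxiliary construction is required; the proof is a one-step composition. The only thing worth double-checking is the typing: Lemma~\ref{lem:psc} gives preservation of subset closure by $\dirimg{\phi}$ on $\ppset{A}$-inputs with $\ppset{B}$-outputs, and this matches the advertised type $\ppset{A}\to\ppset{B}$ of $\dirimg{\dirimg{R}}$.

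There is no real obstacle here, because all the work has already been done in Lemmas~\ref{lem:psc} and~\ref{lem:pscR}. The delicate content sits entirely in Lemma~\ref{lem:pscR}, whose proof exploits the partial-function hypothesis to reconstruct a subset $s\sbs q$ with $\dirimg{R}s = r$ from a given $r\sbs \dirimg{R}q$; Lemma~\ref{lem:psc} is a clean diagrammatic consequence of $\PSC$ together with functoriality of $\dirimg{-}$ and the identity $\ssc = \dirimg{\rsps}$. So the proof of Lemma~\ref{lem:imgssc} is essentially a one-line citation of both.
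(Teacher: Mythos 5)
Your proposal is correct and matches the paper exactly: the paper states Lemma~\ref{lem:imgssc} as an immediate consequence of Lemmas~\ref{lem:psc} and~\ref{lem:pscR}, which is precisely your one-step composition with $\phi := \dirimg{R}$. Nothing further is needed.
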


\paragraph{The theorem.}

To prove $\meansH{c} = \dirimg{\meansT{c}}$,
we want to identify a subset of $\pset\Sta\to\pset\Sta$ 
satisfying two criteria.
First, $\meansT{\args}$ can be defined within it, so in particular it is closed under $\G$ in Figure~\ref{fig:transem}.
Second, on the subset, $\dirimg{-}$ is strict and continuous 
into $\Dpset(\pset\Sta)\to\Dpset(\pset\Sta)$, to enable the use of fixpoint fusion.
Strictness is the reason\footnote{In~\cite{ClarksonS10}, 
other reasons are given for using 
$\{\emptyset\}$ rather than $\emptyset$ as the false hyperproperty.}
to disallow the empty set in (\ref{eq:Dpset});
it makes $\pbot$ (which equals $\dirimg{\bot}$) the least element, whereas otherwise the least element 
would be $\lambda \Q \aftqua \emptyset$.
We need the subset to be closed under pointwise union, at least for chains, so that $\dirimg{-}$ is continuous. 

Given that $\dirimg{R}$ is universally disjunctive for any $R$,
Proposition~\ref{prop:reltran} suggests restricting to universally disjunctive transformers.
Lemma~\ref{lem:psc} suggests restricting to transformers satisfying $\PSC$.
But we were not able to show the universally disjunctive transformers satisfying $\PSC$ are closed under limits.
We proceed as follows.

Define $\Dom\, \phi = \{ x \mid \phi\{x\} \neq \emptyset \}$ and note that $\Dom\dirimg{R}=\rdom R$ where $\rdom R$ is the usual domain of a relation.  
By a straightforward proof we have:

\begin{lemma}\label{lem:domInter}
For universally disjunctive $\phi$ and any $r$ we have $\phi\, r = \phi(r\intersect \Dom\, \phi)$. 
\end{lemma}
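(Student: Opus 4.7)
The plan is to exploit universal disjunctivity twice, in opposite directions, and use the definition of $\Dom\,\phi$ to absorb the ``junk'' indices that contribute nothing to the union.

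First I would write $r$ as the union of its singletons, $r=\bigcup_{x\in r}\{x\}$, and apply universal disjunctivity of $\phi$ to obtain
\[
\phi\,r \;=\; \bigcup_{x\in r}\phi\{x\}.
\]
Next I would split this index set according to whether $\phi\{x\}=\emptyset$: those $x$ with $\phi\{x\}=\emptyset$ contribute nothing, so the union is unchanged if we restrict to $x\in r$ with $\phi\{x\}\neq\emptyset$, which by definition of $\Dom\,\phi$ is precisely $x\in r\intersect\Dom\,\phi$. Finally I would apply universal disjunctivity again, in the reverse direction, to recombine:
\[
\bigcup_{x\in r\intersect\Dom\,\phi}\phi\{x\} \;=\; \phi\Bigl(\bigcup_{x\in r\intersect\Dom\,\phi}\{x\}\Bigr) \;=\; \phi(r\intersect\Dom\,\phi).
\]

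There isn't really a hard step here; the lemma is essentially a bookkeeping consequence of universal disjunctivity plus the definition of $\Dom$. The only thing worth double-checking is that $\bigcup_{x\in r,\phi\{x\}=\emptyset}\phi\{x\}=\emptyset$, which is immediate, and that $\phi$ being universally disjunctive covers the empty union too (so the case $r\intersect\Dom\,\phi=\emptyset$ works, giving $\phi\,\emptyset=\emptyset$ on both sides). Given those observations, the two chained equalities above constitute the whole argument.
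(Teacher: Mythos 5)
Your proof is correct: decomposing $r$ into singletons, discarding the indices outside $\Dom\,\phi$ (which contribute only $\emptyset$), and reassembling by universal disjunctivity is exactly the ``straightforward proof'' the paper alludes to without spelling out. Your care about the empty-union case ($\phi\,\emptyset=\emptyset$) is the right detail to check, and nothing is missing.
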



\begin{lemma}\label{lem:domDisj}
For universally disjunctive $\phi,\psi$ with $\Dom\,\phi\intersect\Dom\,\psi = \emptyset$,
if $\PSC \phi$ and $\PSC \psi$ then $\PSC(\phi\join\psi)$.
\end{lemma}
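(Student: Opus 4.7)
The plan is to verify the characterization of $\PSC$ given by~(\ref{eq:psc}), namely: for any $q,r$, if $(\phi\join\psi)q \sps r$ then there exists $s\sbs q$ with $(\phi\join\psi)s = r$. The opposite inclusion is just monotonicity of $\phi\join\psi$, which already holds. So assume $r \sbs \phi\,q \union \psi\,q$; I need to construct a witness $s \sbs q$ such that $\phi\,s \union \psi\,s = r$.

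The key idea is to split $r$ into the parts produced by $\phi$ and $\psi$ separately, solve each half using the hypothesis $\PSC$ on the corresponding operator, and then glue the witnesses back together. Concretely, define $r_\phi = r \intersect \phi\,q$ and $r_\psi = r \intersect \psi\,q$; since $r$ is contained in $\phi\,q \union \psi\,q$, we have $r = r_\phi \union r_\psi$. Applying $\PSC\,\phi$ to the pair $q,r_\phi$ yields some $s'_\phi \sbs q$ with $\phi\, s'_\phi = r_\phi$, and similarly $\PSC\,\psi$ yields $s'_\psi \sbs q$ with $\psi\, s'_\psi = r_\psi$. Then take $s = (s'_\phi \intersect \Dom\,\phi) \union (s'_\psi \intersect \Dom\,\psi)$, which is a subset of $q$.

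It remains to check $\phi\,s \union \psi\,s = r$. By Lemma~\ref{lem:domInter}, $\phi\,s'_\phi = \phi(s'_\phi \intersect \Dom\,\phi)$, and I claim $\phi\,s = \phi(s \intersect \Dom\,\phi) = \phi(s'_\phi \intersect \Dom\,\phi)$; the second equality uses the disjointness hypothesis $\Dom\,\phi \intersect \Dom\,\psi = \emptyset$, which ensures the $\psi$-summand of $s$ contributes nothing once we intersect with $\Dom\,\phi$. Hence $\phi\,s = r_\phi$, and symmetrically $\psi\,s = r_\psi$, so $(\phi\join\psi)s = r_\phi \union r_\psi = r$, as required.

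The main obstacle is conceptual rather than computational: one has to recognize that without disjointness of domains the split of $s$ could cause $\phi$ to see spurious elements (those from $s'_\psi$) and thus fail to recover exactly $r_\phi$. The disjointness hypothesis plus Lemma~\ref{lem:domInter} is exactly what rules this out. Universal disjunctivity is used only implicitly, via Lemma~\ref{lem:domInter}; the hypothesis $\PSC$ itself never needs to be unfolded beyond its operational form~(\ref{eq:psc}).
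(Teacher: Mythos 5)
Your proof is correct and follows essentially the same route as the paper's: split $r$ into the parts contributed by $\phi$ and $\psi$ via intersection with their images, apply $\PSC$ componentwise, and use domain disjointness together with Lemma~\ref{lem:domInter} to show the glued witness works. The only (immaterial) difference is that the paper restricts $q$ to $q\intersect\Dom\,\phi$ \emph{before} invoking $\PSC$, so its witness lands in $\Dom\,\phi$ automatically, whereas you invoke $\PSC$ on all of $q$ and trim the witness with $\Dom\,\phi$ afterwards.
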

\begin{proof}
For any $q,r$ with $r \sbs (\phi\join\psi)q$ we need to show 
$\some{s\sbs q}{(\phi\join\psi)s = r}$.
First observe 
\begin{calcu}
\fln{ r \sbs (\phi\join\psi)q }
\pln{\equiv}{ r \sbs \phi\, q \union \psi\, q }{def $\join$}
\pln{\equiv}{ r \sbs \phi (q\intersect \Dom\,\phi) \union \psi(q\intersect \Dom\,\psi)}{Lemma \ref{lem:domInter}}
\pln{\imp}{ r = s \union s' \land s \sbs \phi(q\intersect \Dom\,\phi) 
                            \land s' \sbs \psi(q\intersect \Dom\,\psi)}{set theory, 
   letting  $s = r\intersect \phi (q\intersect \Dom\,\phi)$
   and $s' = r\intersect \psi(q\intersect \Dom\,\psi)$}
\pln{\imp}{ \some{t,t'}{t\sbs q\intersect\Dom\,\phi \land t'\sbs q\intersect\Dom\,\psi \land
               \phi\, t = s \land \psi\, t' = s' \land s\union s' = r } \quad (*)}{using $\PSC\phi$ and $\PSC\psi$}
\end{calcu}
We use $(*)$ to show that $t\union t'$ witnesses $\PSC(\phi\join\psi)$, as follows:
$ (\phi\join\psi)(t\union t') 
= \phi(t\union t') \union \psi(t\union t') 
= \phi t \union \psi t'
= s \union s' 
= r
$
using also the definition of $\join$, and 
$\phi(t\union t')=\phi\,t$ and 
$\psi(t\union t')=\psi\,t'$ from Lemma~\ref{lem:domInter} and $(*)$.
\qed
\end{proof}

\begin{lemma}\label{lem:hycondSSC}
If $\Phi$ and $\Psi$ preserve subset closure then 
$(\Phi\hycond{b}\Psi)\Q$ is subset closed (regardless of whether $\Q$ is).
\end{lemma}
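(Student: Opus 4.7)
The plan is to unfold the characterization of $\Phi\hycond{b}\Psi$ given in~(\ref{eq:defhycond}) and exploit the fact that the arguments handed to $\Phi$ and $\Psi$ are always of the form $\pset(\args)$, hence trivially subset closed. Concretely, I would fix $\Q$ and pick an arbitrary $x\in(\Phi\hycond{b}\Psi)\Q$ together with an arbitrary $y\sbs x$, with the goal of showing $y\in(\Phi\hycond{b}\Psi)\Q$. By~(\ref{eq:defhycond}) there exist $p\in\Q$, $r\in\Phi(\pset(\meansT{b}p))$ and $s\in\Psi(\pset(\meansT{\neg b}p))$ with $x=r\union s$; I would reuse this same $p$ and exhibit a decomposition of $y$ relative to it.

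The natural choice is $r':=y\intersect r$ and $s':=y\intersect s$, so that $r'\union s'=y\intersect(r\union s)=y$ while $r'\sbs r$ and $s'\sbs s$. The crucial observation is that $\pset(\meansT{b}p)$ is a full powerset, hence $\ssc(\pset(\meansT{b}p))=\pset(\meansT{b}p)$; since $\Phi$ preserves subset closure by hypothesis, $\Phi(\pset(\meansT{b}p))$ is subset closed, so $r'\in\Phi(\pset(\meansT{b}p))$ follows from $r\in\Phi(\pset(\meansT{b}p))$ together with $r'\sbs r$. Identically, $s'\in\Psi(\pset(\meansT{\neg b}p))$ by preservation of subset closure for $\Psi$. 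Plugging $p,r',s'$ into~(\ref{eq:defhycond}) then places $y$ in $(\Phi\hycond{b}\Psi)\Q$.

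There is really no obstacle here; the only thing worth verifying carefully is that any powerset satisfies the paper's notion of subset closure, which is immediate from the definition of $\ssc$. What I find conceptually important is that $\Q$ itself is never used through its subset-closure properties in this argument: the hypothesis on $\Q$ can be dropped precisely because the operator $\hycond{b}$ feeds $\Phi$ and $\Psi$ the powersets $\pset(\meansT{b}p)$ and $\pset(\meansT{\neg b}p)$ rather than the singletons $\{p\}$ or the set $\Q$ directly. This is exactly the design choice that distinguished the viable $\ijoin$ from the earlier $\otimes$ attempt, and the lemma records one of its payoffs.
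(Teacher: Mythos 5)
Your proof is correct and is essentially identical to the paper's: the same decomposition $r'=r\intersect q'$, $s'=s\intersect q'$, and the same key observation that $\Phi(\pset(\meansT{b}p))$ and $\Psi(\pset(\meansT{\neg b}p))$ are subset closed because powersets are and $\Phi,\Psi$ preserve that closure. Your closing remark about why the hypothesis on $\Q$ is unnecessary is a nice articulation of what the paper leaves implicit.
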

\begin{proof}
Suppose 
$q$ is in $(\Phi\hycond{b}\Psi)\Q$ and 
$q' \sbs q$.  
So according to (\ref{eq:defhycond})
there are $r,s,p$ with 
$p\in \Q$,
$q=r\union s$,
$r \in \Phi(\pset(\meansT{b}p))$, and 
$s \in \Psi(\pset(\meansT{\neg b}p))$.
Let $r'=r\intersect q'$ and 
    $s'=s\intersect q'$, so $r'\sbs r$ and $s'\sbs s$.
Because powersets are subset closed,
$\Phi(\pset(\meansT{b}p))$ and 
$\Psi(\pset(\meansT{\neg b}p))$ are subset closed, hence 
$r'\in\Phi(\pset(\meansT{b}p))$ and 
$s'\in\Psi(\pset(\meansT{\neg b}p))$.
As $q'=r'\union s'$, 
we have $q' \in (\Phi\hycond{b}\Psi)\Q$.
\qed
\end{proof}

It is straightforward to prove that $\Phi\ijoin\Psi$ preserves subset closure if $\Phi,\Psi$ do, similar to the proof of Lemma~\ref{lem:hycondSSC}.
By contrast, $\Phi\otimes\Psi$ does not preserve subset closure even if $\Phi$ and $\Psi$ do.

Next, we confirm that $\meansH{\args}$ can be defined within the monotonic functions 
$\Dpset(\pset\Sta)\to\Dpset(\pset\Sta)$.

\begin{lemma}\label{lem:meansHpsc}
For all $c,\Q$, if $\Q$ is subset closed then so is $\meansH{c}\Q$, provided that
$\PSC\dirimg{\meansR{atm}}$ for every $atm$.
\end{lemma}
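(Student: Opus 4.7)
The plan is to prove the lemma by induction on the structure of $c$, showing at each step that $\meansH{c}$ sends subset-closed inputs to subset-closed outputs. The atom case is handled by the hypothesis: since $\meansT{atm} = \dirimg{\meansR{atm}}$ and $\meansH{atm} = \dirimg{\meansT{atm}}$, applying Lemma~\ref{lem:psc} to $\phi = \meansT{atm} = \dirimg{\meansR{atm}}$ yields preservation of subset closure. The $\skipc$ case is immediate as $\meansH{\skipc} = id$. For sequencing $c;d$, the composition of two subset-closure-preserving transformers again preserves subset closure, so the induction hypotheses combine directly. For choice $c\choice d$, I would invoke the fact already stated in the text that $\Phi\ijoin\Psi$ preserves subset closure whenever $\Phi$ and $\Psi$ do.

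The conditional case $\ifc{b}{c}{d}$ is essentially free: Lemma~\ref{lem:hycondSSC} gives a stronger conclusion than needed, namely that $(\meansH{c}\hycond{b}\meansH{d})\Q$ is subset closed for \emph{every} $\Q$. So the induction hypotheses for $c$ and $d$ are not even required for this step (though they are available).

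The interesting case is $\whilec{b}{c}$, where $\meansH{\whilec{b}{c}} = \lfp \H$ with $\H\Phi = \meansH{c}\c\Phi \hycond{b} \meansH{\skipc}$. Relying on the Kleene characterization $\lfp \H\,\Q = \bigcup_{i\in\nat}\H^i\pbot\,\Q$ (which is available because $\H$ is monotonic on our poset and pointwise union supplies directed suprema), I would show by a subsidiary induction on $i$ that $\H^i\pbot\,\Q$ is subset closed for every subset-closed $\Q$. The base $i=0$ is immediate since $\pbot\,\Q$ is either $\emptyset$ or $\{\emptyset\}$, both subset closed. For $i+1$, the expression $\H^{i+1}\pbot\,\Q = ((\meansH{c}\c\H^i\pbot)\hycond{b}\meansH{\skipc})\Q$ is subset closed by Lemma~\ref{lem:hycondSSC}, irrespective of whether $\H^i\pbot\,\Q$ was. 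A pointwise union of subset-closed sets is subset closed, so $\lfp \H\,\Q$ is subset closed as well.

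The main potential obstacle is the tacit reliance on Kleene approximation for $\lfp \H$. This is already presupposed by the semantics in Figure~\ref{fig:powtransem} and by the surrounding use of fixpoint fusion, so the argument is self-consistent; still, to be rigorous I would verify that $\H$ is $\omega$-continuous on $\Dpset(\pset\Sta)\to\Dpset(\pset\Sta)$, which reduces to checking that pointwise union commutes with $\meansH{c}\c(\args)$ and with $(\args)\hycond{b}\meansH{\skipc}$ on ascending chains. Once that is in hand, the argument above closes the induction.
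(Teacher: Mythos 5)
Your overall strategy --- structural induction on $c$, with a subsidiary induction on the fixpoint iterates for the loop --- is exactly the paper's. But two points need repair. First, you misread Lemma~\ref{lem:hycondSSC}: its conclusion that $(\Phi\hycond{b}\Psi)\Q$ is subset closed is unconditional only in $\Q$; it still \emph{assumes} that $\Phi$ and $\Psi$ preserve subset closure. In the conditional case those assumptions are precisely the induction hypotheses for $c$ and $d$ (the proof of Lemma~\ref{lem:hycondSSC} needs $\Phi(\pset(\meansT{b}p))$ to be subset closed, which it obtains from the fact that $\pset(\meansT{b}p)$ is subset closed \emph{and} that $\Phi$ preserves that), so your claim that the induction hypotheses are ``not even required'' there is wrong. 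Likewise in the loop successor step: applying Lemma~\ref{lem:hycondSSC} to $\Phi=\meansH{c}\c\H^i\pbot$ requires that this composite preserve subset closure, which uses both the outer induction hypothesis for $\meansH{c}$ and the subsidiary induction hypothesis that $\H^i\pbot$ preserves subset closure, applied to the subset-closed argument $\meansH{c}(\pset(\meansT{b}p))$. Because your subsidiary induction statement is quantified over all subset-closed $\Q$, the needed hypothesis is in fact available and the argument is repairable, but the step is not ``irrespective'' of it as you assert; the paper's proof explicitly invokes both hypotheses here.

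Second, you base the loop case on the $\omega$-chain $\bigcup_{i\in\nat}\H^i\pbot\,\Q$, which forces you to discharge $\omega$-continuity of $\H$ --- an obligation you only promise to verify, and which is not established in the paper. The paper avoids it entirely: since $\pbot$ is least in $\Dpset(\pset\Sta)\to\Dpset(\pset\Sta)$ and $\H$ is monotonic, $\lfp\H$ is reached as a transfinite iterate $\H^\gamma$ for some ordinal $\gamma$, and the proof proceeds by ordinal induction, with limit ordinals handled by the observation that joins are pointwise unions and subset closure is preserved by arbitrary unions. That argument needs only monotonicity. Either switch to ordinal induction or actually prove the continuity claim; as written, the loop case rests on an unverified premise.
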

\begin{proof}
By induction on $c$.
\begin{itemize}
\item 
$atm$: 
$\meansH{atm}$ is $\dirimg{\dirimg{\meansR{atm}}}$ so by assumption
$\PSC\dirimg{\meansR{atm}}$ and Lemma~\ref{lem:psc}.

\item 
$\skipc$: immediate.

\item 
$c ; d$: by definitions and induction hypothesis.

\item 
  $c\choice d$: by induction hypothesis and observation above about $\ijoin$.

\item 
$\ifc{b}{c}{d}$: by Lemma~\ref{lem:hycondSSC} and induction hypothesis.

\item 
$\whilec{b}{c}$:  
Because $\pbot$ is least in $\Dpset(\pset\Sta)\to\Dpset(\pset\Sta)$, we have $\pbot\rt\H\pbot$,
so using monotonicity of $\H$ we have Kleene iterates.
Suppose $\Q$ is subset closed. 
To show $\lfp \H \, \Q$ is subset closed, note that $\lfp \H = \H^\gamma \, \Q$ where $\gamma$ is some ordinal. We show that $\H^\alpha \, \Q$ is subset closed, for every $\alpha$ up to $\gamma$,
by ordinal induction.
\begin{itemize}
\item $\H^0 \, \Q = \Q$ which is subset closed.

\item $\H^{\alpha+1} \, \Q = (\meansH{c}\c \H^\alpha \hycond{b} \meansH{\skipc})\Q$ by definition of $\H$.  
Now $\H^\alpha$ preserves subset closure by the ordinal induction hypothesis,
and $\meansH{c}$ preserves subset closure by the main induction hypothesis.
So $\meansH{c}\c \H^\alpha$ preserves subset closure, as does $\meansH{\skipc}$.
Hence $\meansH{c}\c \H^\alpha \hycond{b} \meansH{\skipc}$ preserves subset closure 
by Lemma~\ref{lem:hycondSSC}.

\item $\H^\beta \, \Q = (\join_{\alpha<\beta} H^\alpha)\Q$ (for non-0 limit ordinal $\beta$),
which in turn equals 
$\union_{\alpha<\beta} (H^\alpha \, \Q)$ because $\join$ is pointwise.  
By induction, each $H^\alpha \, \Q$ is subset closed, and closure is preserved by union, 
so we are done.
\qed
\end{itemize}
\end{itemize}
\end{proof}

Returning to the two criteria for a subset of $\pset\Sta\to\pset\Sta$, 
suppose $\meansR{atm}$ is a partial function, for all $atm$ --- in short,
\dt{atoms are deterministic}. 
If in addition $c$ is $\choice$-free, then $\meansR{c}$ is a partial function.  
Under these conditions, by Proposition~\ref{prop:reltran}, $\meansT{c}$ is the direct image of a partial function.  

Let $\IPF$ be the subset of $\pset\Sta\to\pset\Sta$ that are direct images of partial functions,
i.e.,
$\IPF = \{ \phi\in \pset\Sta\to\pset\Sta \mid
         \some{R}{\phi=\dirimg{R} \mbox{ and } \converse{R}\c R\subseteq id }  \}$.
Observe that $\IPF$ is closed under $\G$,
because for $\phi\in \IPF$ with $\phi=\dirimg{R}$ we have $\G\dirimg{R} 
= \dirimg{\meansR{b}}\c\dirimg{\meansR{c}}\c\dirimg{R} \join
                        \dirimg{\meansR{\neg b}}
= \dirimg{\meansR{b}\c\meansR{c}\c R} \join
                        \dirimg{\meansR{\neg b}}
= \dirimg{\meansR{b}\c\meansR{c}\c R \union \meansR{\neg b}}$
and the union is of partial functions with disjoint domains
so it is a partial function.
We have $\bot\rt \G\bot$ because $\bot$ is the least element
in $\pset\Sta\to\pset\Sta$.
By Lemma~\ref{lem:imgssc}, when $\dirimg{-}$ is restricted to $\IPF$, its range is included in 
$\Dpset(\pset\Sta)\to\Dpset(\pset\Sta)$.
In $\IPF$, lubs of chains are given by pointwise union, 
so $\dirimg{-}$ is a strict and continuous function 
from $\IPF$ to $\Dpset(\pset\Sta)\to\Dpset(\pset\Sta)$.

To state the theorem,
we write $\dot{=}$ for extensional equality 
on h-transformers of type  $\Dpset(\pset\Sta)\to\Dpset(\pset\Sta)$,
i.e., equal results on all subset closed $\Q$.

\begin{theorem}\label{thm:main}
$\dirimg{\meansT{c}} \; \dot{=} \; \meansH{c}$, provided atoms are deterministic and $c$ is $\choice$-free.
\end{theorem}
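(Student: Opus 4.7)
The approach is induction on $c$. The $\choice$-free hypothesis eliminates the nondeterministic choice case, and the determinism of atoms guarantees that $\meansR{c'}$ is a partial function for every sub-command, so $\meansT{c'} \in \IPF$ and by Lemma~\ref{lem:imgssc} each $\dirimg{\meansT{c'}}$ preserves subset closure. This last fact is essential, because $\dot{=}$-equalities compose cleanly only when the intermediate outputs remain subset closed.

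The cases $atm$, $\skipc$, and $c;d$ dispatch by direct calculation: both sides unfold to the same thing (using $\dirimg{id_{\pset\Sta}} = id_{\ppset\Sta}$), and for $c;d$ functoriality of $\dirimg{-}$ over $\c$ combined with the two induction hypotheses gives $\dirimg{\meansT{c;d}}(\Q) = \meansH{c;d}(\Q)$ on any subset-closed $\Q$, the two applications chaining correctly because $\dirimg{\meansT{c}}(\Q)$ is itself subset closed. The conditional case reduces directly to Lemma~\ref{lem:imghycond} with $\phi := \meansT{c}$ and $\psi := \meansT{d}$, which on subset-closed $\Q$ rewrites $\dirimg{\meansT{\ifc{b}{c}{d}}}\Q$ as $(\dirimg{\meansT{c}} \hycond{b} \dirimg{\meansT{d}})\Q$; the induction hypotheses substitute $\meansH{c}$ and $\meansH{d}$ for the direct images, which is legal because (\ref{eq:defhycond}) applies each to the subset-closed powerset $\pset(\meansT{b}p)$.

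The loop case is the main step. I apply the fixpoint fusion rule (\ref{eq:fusion}) with $f := \G$, $g := \H$, and $h := \dirimg{-}$ viewed as a map $\IPF \to (\Dpset(\pset\Sta) \to \Dpset(\pset\Sta))$. The text already records that $\IPF$ is closed under $\G$ and contains $\bot$, and Lemma~\ref{lem:imgssc} puts the image of $\dirimg{-}$ inside the target. Strictness holds because $\dirimg{\bot} = \pbot$ is the least element of the target. Continuity of $\dirimg{-}$ on chains in $\IPF$, evaluated on a subset-closed $\Q$, is the delicate point: for a chain $\phi_i \rt \phi^*$ in $\IPF$ and any $p \in \Q$, one checks $\phi_i(p) = \phi^*(p \cap \Dom\,\phi_i)$ using Lemma~\ref{lem:domInter} together with the fact that two partial functions in a chain agree on the smaller's domain; then $p \cap \Dom\,\phi_i \in \Q$ by subset closure, and this witnesses $\dirimg{\phi_i}(\Q) \subseteq \dirimg{\phi^*}(\Q)$. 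The fusion equation $\dirimg{\G\phi} \dot{=} \H(\dirimg{\phi})$ is then the conditional argument rerun: rewrite $\meansT{\neg b}$ as $\meansT{\neg b}\c\meansT{\skipc}$, apply Lemma~\ref{lem:imghycond} with $\phi' := \meansT{c}\c\phi$ and $\psi' := \meansT{\skipc}$, distribute $\dirimg{-}$ over $\c$, and invoke the main induction hypothesis $\dirimg{\meansT{c}} \dot{=} \meansH{c}$ (which applies because $\dirimg{\phi}$ sends $\pset(\meansT{b}p)$ to a subset-closed set). Fusion delivers $\dirimg{\lfp \G} = \lfp \H$ on $\Dpset(\pset\Sta)$, which is exactly $\dirimg{\meansT{\whilec{b}{c}}} \dot{=} \meansH{\whilec{b}{c}}$.

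I expect the main obstacle to be the continuity check in the loop case. Naively $\dirimg{-}$ is \emph{not} continuous from transformers to h-transformers: pointwise union in $\IPF$ gives a set of unions at the outer level, while pointwise union of h-transformers gives a union of sets, and these differ in general. The equality is recovered only on subset-closed inputs and only because all transformers in sight are direct images of partial functions, so that each contribution $\phi_i(p)$ can be re-identified as $\phi^*(p \cap \Dom\,\phi_i)$ by subset closure. This is precisely why the theorem has to be stated up to $\dot{=}$ rather than full function equality, and why the dual hypotheses of deterministic atoms and $\choice$-free $c$ enter together: they confine every semantic object to $\IPF$, which is the natural class of transformers on which the restricted form of continuity used in the fusion argument holds.
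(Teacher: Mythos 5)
Your proposal follows the paper's proof essentially step for step: induction on $c$, the pre-(\ref{eq:badcond}) calculations for $\skipc$, atoms and sequencing, Lemma~\ref{lem:imghycond} for the conditional, and fixpoint fusion with $h:=\dirimg{-}$ restricted to $\IPF$ for the loop, with the fusion antecedent $\dirimg{\G\phi}\:\dot{=}\:\H\dirimg{\phi}$ discharged by exactly the paper's calculation (insert $\meansT{\skipc}$, apply Lemma~\ref{lem:imghycond}, distribute $\dirimg{-}$, invoke the induction hypothesis). Your side remarks --- that the induction hypotheses may be applied under $\hycond{b}$ only because (\ref{eq:defhycond}) feeds its arguments the subset-closed sets $\pset(\meansT{b}p)$, and that determinism and $\choice$-freedom serve precisely to confine everything to $\IPF$ --- are correct and are points the paper leaves implicit.

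The one place you go beyond the paper is the continuity of $\dirimg{-}$ on chains in $\IPF$, which the paper simply asserts; and there your argument establishes only one of the two inclusions. The device $\phi_i\,p=\phi^*(p\cap\Dom\,\phi_i)$ with $p\cap\Dom\,\phi_i\in\Q$ correctly shows $\bigl(\join_i\dirimg{\phi_i}\bigr)\Q\sbs\dirimg{\phi^*}\Q$, i.e.\ every finite-stage contribution is realized by the limit. But continuity (and the direction of fusion that yields $\dirimg{\lfp\G}\rt\lfp\H$) also needs the converse inclusion $\dirimg{\phi^*}\Q\sbs\bigl(\join_i\dirimg{\phi_i}\bigr)\Q$: for each $p\in\Q$ one must exhibit a \emph{single} index $j$ and some $q\in\Q$ with $\phi_j\,q=\phi^*\,p=\bigcup_i\phi_i\,p$. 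When that union is strictly increasing (a loop with unbounded termination time over an infinite $\Sta$, e.g.\ $\whilec{x>0}{\seqc{x:=x-1}{y:=y+1}}$ started from $\{(n,0)\mid n\in\nat\}$), no restriction $p\cap\Dom\,\phi_j$ will do, since $\phi_j(p\cap\Dom\,\phi_j)=\phi_j\,p\subsetneq\phi^*\,p$ for every $j$; the witness, if it exists, must be a different $q\in\Q$ altogether, and your argument does not produce one. So your stated diagnosis ("the equality is recovered \ldots by subset closure") claims more than your argument delivers; you need either to prove this second inclusion or to retreat to the inequational form of fusion, which yields only $\dirimg{\lfp\G}\rf\lfp\H$ rather than the equality the theorem asserts.
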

\begin{proof}
By induction on $c$.  
For the cases of $\skipc$, atoms, and $\c$ the arguments preceding (\ref{eq:badcond}) are still valid.
For conditional, observe
\begin{calcushort}
\flns{ \dirimg{ \meansT{\ifc{b}{c}{d}}}}
\plns{\dot{=}}{ \dirimg{ \meansT{b}\c\meansT{c} \join \meansT{\neg b}\c\meansT{d} }}{semantics}
\plns{\dot{=}}{ \dirimg{\meansT{c}} \hycond{b} \dirimg{\meansT{d}} }{Lemma~\ref{lem:imghycond}} 
\plns{\dot{=}}{ \meansH{c} \hycond{b} \meansH{d} }{induction hypothesis}
\plns{\dot{=}}{ \meansH{\ifc{b}{c}{d}} }{semantics}
\end{calcushort}
Finally, the loop:
\begin{calcushort}
\flns{ \dirimg{ \meansT{\whilec{b}{c}} }}
\plns{\dot{=}}{ \dirimg{ \lfp \G }}{semantics}
\plns{\dot{=}}{ \lfp \H }{fixpoint fusion, see below}
\plns{\dot{=}}{ \meansH{\whilec{b}{c}} }{semantics}
\end{calcushort}
The antecedent for fusion is 
\( \all{\phi}{ \dirimg{\G \phi} = \H\dirimg{\phi} } \)
and it holds because for any $\phi$:
\begin{calcushort}
\flns{ \dirimg{ \G \phi } }
\plns{\dot{=}}{ \dirimg{ \meansT{b}\c\meansT{c}\c\phi \join \meansT{\neg b} }}{def $\G$}
\plns{\dot{=}}{ \dirimg{ \meansT{b}\c\meansT{c}\c\phi \join \meansT{\neg b}\c\meansT{\skipc} }}{skip law}
\plns{\dot{=}}{ \dirimg{ \meansT{c}\c\phi } \hycond{b} \dirimg{ \meansT{\skipc} } }{Lemma~\ref{lem:imghycond}}
\plns{\dot{=}}{ \dirimg{ \meansT{c} }\c\dirimg{ \phi } \hycond{b} \dirimg{\meansT{\skipc}} }{$\dirimg{-}$ distributes over  $\c$}
\plns{\dot{=}}{ \meansH{c}\c\dirimg{ \phi } \hycond{b} \meansH{\skipc} }{induction hypothesis}
\plns{\dot{=}}{ \H\dirimg{\phi} }{def $\H$}
\end{calcushort}
\end{proof}

\section{Specifications and refinement}\label{sec:spec}

We wish to conceive of specifications as miraculous programs that can 
achieve by refusing to do, can choose the best angelically, and can compute the uncomputable.  
We wish to establish rigorous connections between programs and specifications, 
perhaps by deriving a program that can be automatically compiled for execution,
perhaps by deriving a specification that can be inspected to determine the usefulness or trustworthiness of the program.
A good theory may enable automatic derivation in one direction or the other, 
but should also account for ad hoc construction of proofs.
Simple reasons should be expressed simply, so algebraic laws and transitive refinement chains are important.  
In this inconclusive section, we return to the general notion of hyperproperty and consider how the h-transformer semantics sheds light on refinement for hyperproperties.  Initially we leave aside the signature/semantics notations.

Let $R:\Sta\relto\Sta$ be considered as a program, and $\Hp$ be a hyperproperty, that is, $\Hp$ is a set of programs.  Formally: $\Hp\in\pset(\Sta\relto\Sta)$.  For $R$ to satisfy $\Hp$ means, by definition, that $R\in\Hp$.
The example of possibilistic noninterference shows that in general trace refinement is unsound: $R\in \Hp$ does not follow from 
$S\in \Hp$ and $S\sps R$.
It does follow in the case that $\Hp$ is subset closed.  Given that $\pset(\Sta\relto\Sta)$ is a huge space, one may hope that specifications of practice interest may lie in relatively tame subsets.  Let us focus on subset closed hyperproperties,
for which one form of chain looks like
\begin{equation}\label{eq:firstChain}
\Hp \;\Ni\;  S \sps \ldots \sps T \sps R 
\end{equation}
Although this is a sound way to prove $R\in\Hp$, it does not seem sufficient, at least for examples like $\NI$ which require some degree of determinacy.  The problem is that for
intermediate steps of trace refinement it is helpful to use nondeterminacy for the sake of abstraction and underspecification, so finding suitable $S$ and $T$ may be difficult. 
One approach to this problem is to use a more nuanced notion of refinement, that preserves a hyperproperty of interest.
For confidentiality, Banks and Jacob explore this approach in the setting of UTP~\cite{BanksJ10a}.

Another form of chain looks like 
\begin{equation}\label{eq:secondChain}
 \Hp \sps \mathbb{S} \sps \ldots \sps \mathbb{T} \;\Ni\; R 
\end{equation}
where most intermediate terms are at the hyper level, i.e., $\mathbb{S}$ and $\mathbb{T}$ are, like $\Hp$, elements of 
$\pset(\Sta\relto\Sta)$. 
The chain is a sound way to prove $R\in \Hp$, even if $\Hp$ is not subset closed.  
But in what way are the intermediates $\mathbb{S},\mathbb{T},\ldots$ expressed, and by what reasoning are the containments established?  What is the relevant algebra, beyond elementary set theory?

The development in Section~\ref{sec:htrans} is meant to suggest a third form of chain:
\begin{equation}\label{eq:thirdChain}
  \ldots \rf \mathbb{S} \rf \ldots \rf \mathbb{T} \rf \dirimg{\dirimg{R}} 
\end{equation}
Here the intermediate terms are of type $\Dpset(\pset\Sta)\to\Dpset(\pset\Sta)$ and $\rt$ is the pointwise ordering (used already in Section~\ref{sec:htrans}).   
The good news is that if the intermediate terms are expressed using program notations, they may be amenable to familiar laws such as those of Kleene algebra with 
tests~\cite{Kozen00,Struth16}, for which relations are a standard model.  
A corollary of Theorem~\ref{thm:main} is that the laws hold for deterministic terms expressed in the signature (\ref{eq:signature}).  
To make this claim precise one might spell out the healthiness conditions of elements in the range of $\meansH{\args}$,  
but more interesting would be to extend the language with specification constructs, using (in)equational conditions like 
$\converse{R}\c R \sbs id$ as antecedents in conditional laws of healthy fragments.  
We leave this aside in order to focus on a gap in our story so far.


The third form of chain is displayed with elipses on the left because we lack an account of specifications! 
Our leading example, $\NI$, is defined as a set of relations, whereas 
in (\ref{eq:thirdChain})
the displayed chain needs the specification, say $\Psi$, to have type $\Dpset(\pset\Sta)\to\Dpset(\pset\Sta)$.  
The closest we have come is the characterization  $R\in\NI$ iff $\dirimg{\dirimg{R}}\A \sbs \A$, see (\ref{eq:NI}).
But this is a set containment, whereas we seek $\Psi$ with
$R\in\NI$ iff $\Psi\rf\dirimg{\dirimg{R}}$.
In the rest of this section we sketch two ways to proceed.

On the face of it, $\Psi\rf\dirimg{\dirimg{R}}$ seems problematic because $\rf$ is an ordering on functions. 
Given a particular set $p\in\pset\Sta$ with $\Agrl p$, a noninterfering  $R$ makes specific choice of value for $lo$ whereas specification $\Psi$ should allow any value for $lo$ provided that the choice does not depend on the initial value for $hi$.  
One possibility is to escalate further and allow the specification to be a relation $\Dpset(\pset\Sta)\relto\Dpset(\pset\Sta)$.  
To define such a relation, first lift the predicate $\Agrl$ on sets to the filter 
$\widehat{\Agrl}: \pset(\pset\Sta)\relto\pset(\pset\Sta)$ defined by
\[ \widehat{\Agrl}\,\Q = \{ p\in \Q \mid \Agrl\, p \} \]
Note that  $\A = \widehat{\Agrl}(\pset\Sta)$.  
More to the point, $\widehat{\Agrl}\,\Q = \Q$ just if each $p\in\Q$ satisfies $\Agrl$.
Now define $\mathbf{NI}$,
as a relation 
$\mathbf{NI} : \Dpset(\pset\Sta)\relto\Dpset(\pset\Sta)$,  
by 
\[ \P \:\mathbf{NI}\: \Q \quad\mbox{iff}\quad  \widehat{\Agrl}\,\P = \P \imp \widehat{\Agrl}\,\Q = \Q \]
This achieves the following: $R\in\NI$ iff $\mathbf{NI}\sps \dirimg{\dirimg{R}}$.
But this inclusion does not compose transitively with $\rf$ in the third form of chain,
so we proceed no further in this direction.

The second way to proceed can be described using a variation on the h-transformer semantics
of Section~\ref{sec:htrans}.
It will lead us back to 
the second form of chain, (\ref{eq:secondChain}), in particular for $\NI$ as $\Hp$.
The idea resembles UTP models of reactive processes~\cite[Chapt.~8]{HoareHe},
in which an event history is related to its possible extension.
Here we use just pre-post traces, as follows.
Let $\Trc = \Sta\times\Sta$.  
Consider a semantics $\meansT{\args}'$ such that 
$\meansT{c}'$ has type $\pset\Trc\to\pset\Trc$. 
Instead of transforming an initial state to a final one
(or rather, state set as in $\meansT{\args}$),
an initial trace $(\sigma,\tau)$ is mapped to traces $(\sigma,\upsilon)$ for $\upsilon$ with $\tau \meansR{c} \upsilon$.  
The semantics $\meansT{\args}'$ is not difficult to define (or see~\cite[sec.~2]{AssafNSTT17}).
The upshot is that for $S\in\pset\Trc$,
the trace set $\meansT{c}' S$ is the relation $S\c\meansR{c}$.
In particular, let $init$ be $id_{\Sta}$, viewed as an element of $\pset\Trc$.
We get $\meansT{c}'init = \meansR{c}$, the relation denoted by $c$.
Lifting, we obtain a semantics $\meansH{\args}'$,
at the level $\ppset\Trc\to\ppset\Trc$,
such that 
$\meansH{c}'\{init\}$ contains $\meansR{c}$.  
(In light of Theorem~\ref{thm:main} and the discussion preceding it, we do not expect
$\meansH{c}'\{init\}$ to be just the singleton $\{ \meansR{c} \}$, as $\{init\}$ is not subset closed.)
This suggests a chain of the form
$\NI \sps \ldots \sps \Psi(\ssc\{init\}) \sps \meansH{c}'(\ssc\{init\}) \,\Ni\, \meansR{c} $
which proves that $\meansR{c}$ satisfies $\NI$ --- and which may be derived from a subsidiary chain of refinements 
like $\Psi \rf \meansH{c}'$ as in the third form of chain, independent of the argument $\ssc\{init\}$.

This approach has been explored in the setting of abstract interpretation, where the intermediate terms are obtained as a computable approximation of a given program's semantics.  To sketch the the idea we first review abstract interpretation for trace properties.
Mathematically, abstract interpretation is very close to data refinement, where intermediate steps involve changes of data representation.
For example, the state space $\Sta$ of $R:\Sta\relto\Sta$ would be connected with another, say $\Delta$, 
by a coupling relation $\rho:\Delta\relto\Sta$ subject to a simulation condition such as 
$S\c\rho \sps \rho\c R$, recall (\ref{eq:sim}).
With a functional coupling, the connection could be $\rho(\dirimg{S}\Delta) \rf \dirimg{R}(\rho\Delta)$.  

Let $T\in \Sta\relto\Sta$ be a trace set intended as a trace property specification.
In terms of the trace-computing semantics $\meansT{\args}'$ above, $c$ satisfies $T$ provided that
$T \sps \meansT{c}' init$.
It can be proved by the following chain, ingredients of which are to be explained.
\[
T \;\sps\; \gamma(\meansT{c}^\sharp \, a) 
  \;\sps\; \meansT{c}'(\gamma\, a) 
  \;\sps\; \meansT{c}' init 
\]
Here $\gamma:A\to\pset\Trc$ is like $\rho$ above, mapping some convenient domain $A$ to traces.
The element $a\in A$ is supposed to be an approximation of the initial traces, i.e.,
$\gamma\,a \sps init$.
Thus the containment $\meansT{c}'(\gamma\,a) \sps \meansT{c}' init$ is by monotonicity of 
semantics.  
The next containment, $\gamma(\meansT{c}^\sharp a) \sps \meansT{c}'(\gamma a)$, involves 
an ``abstract'' semantics $\meansT{c}^\sharp : A \to A$.
Indeed, the containment is the soundness requirement for such semantics. 
What remains is the containment
$T \sps \gamma(\meansT{c}^\sharp \, a)$ which needs to be checked somehow.  
Typically, $\gamma$ is part of a Galois connection, i.e., it has a lower adjoint $\alpha:\pset\Trc\to A$
and the latter check is equivalent to 
$\alpha T \geq \meansT{c}^\sharp \, a$ where $\geq$ is the order on $A$.
Ideally it is amenable to automation, but that is beside the point.

The point is to escalate this story to the hyper level, in a chain of this form:
\[ 
\Hp \;\sps\; \gamma(\meansH{c}^\sharp \, a) 
  \;\sps\; \meansH{c}'(\gamma\, a) 
  \;\sps\; \meansH{c}'(\ssc\{init\})
  \;\,\Ni\;\,  \meansT{c}' init
\] 
Now $\gamma$ has type $A\to\ppset\Trc$.
Again, the abstract semantics should be sound --- condition $\gamma(\meansH{c}^\sharp \, a) \sps \meansH{c}'(\gamma\, a)$ --- now with respect to a set-of-trace-set semantics $\meansH{\args}'$ that corresponds to our Fig.~\ref{fig:powtransem}.
The element $a\in A$ now approximates the set $\ssc\{init\}$
and $\meansH{c}'(\gamma\, a) \sps \meansH{c}'(\ssc\{init\})$ is by monotonicity.
The step $\Hp \sps \gamma(\meansH{c}^\sharp \, a)$ may again be checked at the abstract level
as $\alpha \Hp \geq \meansH{c}^\sharp \, a$.
Of course $\Hp$ is a hyperproperty so the goal is to prove the program is an element of $\Hp$.
This follows provided that 
$\meansH{c}'(\ssc\{init\}) \; \Ni \;  \meansT{c}' init$,
a connection like our Theorem~\ref{thm:main} except for moving from $\pset\Sta$ to $\pset\Trc$.

\section{Related work}\label{sec:related}

The use of algebra in unifying theories of programming has been explored in many works including the book that led to the UTP meetings~\cite{HoareHe,Sampaio97,HoareMSW11}.
Methodologically oriented works include the books by Morgan~\cite{Morgan:book} and by Bird and de 
Moor~\cite{BirdDeM}.

The term hyperproperty was introduced by Clarkson and Schneider who among other things mention that refinement at the level of trace properties is admissible for proving subset closed hyperproperties~\cite{ClarksonS10}.
They point out that the topological classification of trace properties, i.e., safety and liveness, corresponds to similar notions dubbed hypersafety and hyperliveness.  Subset closed hyperproperties strictly subsume hypersafety.
As it happens, $\NI$ is in the class called 2-safety that specifies a property as holding for every pair of traces.    
For fixed $k$, one can encode $k$-safety by a product program, each trace of which represents $k$ traces of the original.  
Some interesting requirements, such as quantitative information flow, are not in $k$-safety for any $k$.

Epistemic logic is the topic of a textbook~\cite{HalpernEtalKnowl95}
and has been explored in the security literature~\cite{BalliuDG11}.
Mantel considers a range of security properties via closure operators~\cite{Mantel02}.
The limited usefulness of trace refinement for proving $\NI$ even for deterministic programs, as in the chain (\ref{eq:firstChain}), is discussed by Assaf and Pasqua~\cite{AssafNSTT17,MastroeniP18}.
The formulation of possibilistic noninterference as 
$\lagree \c R \c \lagree \; = \; R\c\lagree$ is due to Joshi and Leino~\cite{Joshi:Leino:SCP00}
and resembles the formulation of Roscoe et al.~\cite{RoscoeWW1994}.

The textbook of Back and von Wright~\cite{Back:book} explores predicate transformer semantics and refinement calculus.  
The use of $sglt$ and $\Ni$ in (\ref{eq:relRecover}) is part of the extensive algebra connecting predicate transformers and relations using categorical notions~\cite{MartinSCP}.
Upward closed sets of predicates play an important role in that algebra~\cite{Naumann98},
which should be explored in connection with the present investigation and its potential application to higher order programs.
The extension of functional programming calculus to imperative refinement is one setting in which strong laws (Cartesian closure) for a well-behaved subset are expressed 
as implications with inequational antecedents~\cite{Naumann95a,Naumann15}.
These works use backward predicate transformers in order to model general specifications and in particular the combination of angelic and demonic nondeterminacy.
Alternative models with similar aims can be found in
Martin \emph{et al}~\cite{MartinCR07} and Morris \emph{et al}~\cite{MorrisBT09}.

A primary precursor to this paper is the dissertion work of Assaf, which targets refinement chains in the style of abstract interpretation~\cite{CousotCousot79,Cou99}.
Assaf's work~\cite{AssafNSTT17} 
introduced a set-of-sets lifted semantics 
from which our h-transformer semantics is adapted.
In keeping with the focus on static analysis, Assaf shows the lifted semantics is an approximation of the underlying one.\footnote{Assaf \emph{et al} use fixpoint fusion in the inequational form mentioned following (\ref{eq:fusion}), 
to prove soundness of the derived abstract semantics.
Their inequational result corresponding to our Theorem is proved, in the loop case, using explicit induction on approximation chains.  
See the proof of Theorem~1 in~\cite{AssafNSTT17}.}
Assaf derives an abstraction $\meansH{c}^\sharp$ for dependency from $\meansH{c}$ (for every $c$), by calculation, following Cousot~\cite{Cou99} and similar to data refinement by calculation~\cite{HeHoareSanders86,Morgan:dr}.
For this purpose and others, it is essential that loops be interpreted by fixpoint at the level of sets-of-sets, so standard fixpoint reasoning is applicable,  
as opposed to using $\dirimg{\meansT{\whilec{b}{c}}}$ which is not a fixed point per se.
In fact Assaf derives two abstract semantics $\meansH{c}^\sharp$:
one for dependency ($\NI$) and one that computes cardinality of low-variation, for quantitative information flow properties.  The cardinality abstraction is not in $k$-safety for any $k$.

Pasqua and Mastroeni have aims similar to Assaf \emph{et al}, and investigate several variations on set-of-set semantics of loops~\cite{MastroeniP18}.
Our example in Section~\ref{sec:htrans} is adapted from their work,
which uses examples to suggest that Assaf's definition (called ``mixed'' in~\cite{MastroeniP18}) is preferable.
They also point out that, for subset closed hyperproperties,
these variations are precise in the sense of our Theorem~\ref{thm:main},
strengthening the inequation in Assaf \emph{et al}.\footnote{Displayed formula  $\{ \meansT{c}T \mid T \in \mathbb{T} \} \subseteq \meansH{c}\mathbb{T}$ 
   following Theorem 1 of~\cite{AssafNSTT17}.}

A peculiarity of these works is the treatment of conditionals.  
Assaf \emph{et al} take the equation $\meansH{\ifc{b}{c}{d}} = \dirimg{\meansT{\ifc{b}{c}{d}}}$ as the definition of $\meansH{\ifc{b}{c}{d}}$, but this makes the definition of $\meansH{\args}$ non-compositional and thus the inductive proofs a little sketchy.  
We do not discern an explicit definition in~\cite{MastroeniP18},
but do find remarks like this: ``The definition of the collecting hypersemantics is just the additive lift\ldots for every statement, except for the while case.''  
As a description of fact, it is true for subset closed hyperproperties (see our Theorem~\ref{thm:main}). 
But it is unsuitable as a definition.
We show that a proper definition is possible.

Non-compositionality for sequence and conditional seems difficult to reconcile with proofs by induction on program structure.
It also results in anomalies, e.g., the formulation in Assaf \emph{et al} means that in case $c$ is a loop, the semantics of $c$ 
is different from the semantics of $\ifc{true}{c}{\skipc}$.
The obscurity is rectified when the semantics is restricted to subset closed sets,
as spelled out in detail in an unpublished note~\cite{GotliboymNaumann},
confirming remarks by Pasqua and Mastroeni~\cite{MastroeniP17} 
elaborated in~\cite{MastroeniP18} and in the thesis of Pasqua.

Although similar to results in the preceding work, our results are novel in a couple of ways.
Our semantics is for a language with nondeterminacy, unlike theirs.
Of course, nondeterministic programs typically fail to satisfy $\NI$ and related properties,
and our theorem is restricted to deterministic programs.
A minor difference is that they formulate loop semantics in a standard form mentioned in Footnote~\ref{fn:tailrec} that asserts the negated guard following the fixpoint rather than as part of it.
Those works use semantics mapping traces to traces (or sets of sets thereof), 
like our $\meansT{\args}'$ and $\meansH{\args}'$, 
said to be needed in order to express dependency.  
We have shown that states-to-states is sufficient to exhibit both the anomaly and its resolution. 
It suffices for specifications of the form (\ref{eq:NI})
and may facilitate further investigation owing to its similarity to many variations on relational and transformer semantics.

Apropos $\NI$, the formulation (\ref{eq:NI}) is robust in the sense that it generalizes to more nuanced notions of dependency:
$\dirimg{\dirimg{R}} \P \sbs \Q$ where $\P$ expresses agreement on some projections of the input (e.g., agreement on whether a password guess is correct, or agreement on some aggregate value derived from a sensitive database)
and $\Q$ expresses agreement on the observable output values.
Such policies are the subject of ~\cite{SabelfeldSandsDeclassJ}.

Banks and Jacob~\cite{BanksJ10a} formalize general confidentiality policies in UTP 
and introduce a family of confidentiality preserving refinement relations.
The ideas are developed further in subsequent work
where confidentiality-violating refinements are represented as miracles~\cite{BanksJ14}
and knowledge is explicitly represented by sets encoding alternate executions,
an idea that has appeared in other guises~\cite{Morgan09,AssafN16}.




\section{Conclusion}\label{sec:concl}





We have given what, to the best of our knowledge, is the first 
compositional definition of semantics at the hyper level.
Moreover, we proved that it is the lift of a standard semantics when restricted to subset closed hyperproperties.  The latter is a ``forward collecting semantics''
in the terminology of abstract interpretation.
The new semantics includes nondeterministic constructs, although the lifting equivalence is only proved for the deterministic fragment.

Although deterministic noninterference is a motivating example, there are other interesting hyperproperties such as quantitative information flow that can be expressed in subset closed form  
and which are meaningful for nondeterministic programs.  
This is one motivation for further investigation including the following questions.
$\bullet$
Restricting to subset closed hyperproperties is sufficient to make possible a compositional fixpoint semantics at the hyper level that accurately represents the underlying semantics --- is it necessary?
$\bullet$ The h-transformer semantics allows nondeterministic choice and nondeterministic atoms that satisfy $\PSC$ (in light of Lemma~\ref{lem:meansHpsc}), and in fact the definitions 
can be used for a semantics in $\pset(\pset\Sta)\to\pset(\pset\Sta)$, into which 
$\Dpset(\pset\Sta)\to\Dpset(\pset\Sta)$ embeds nicely owing to joins being pointwise --- 
but what exactly is the significance of determinacy?  
$\bullet$ Are disjunctive transformers satisfying $\PSC$ closed under join?
What is a good characterization of transformers that are images of relations?

A person not familiar with unifying theories of programming may wonder whether programs are specifications.
Indeed, the author was once criticized by a famous computer scientist who objected to refinement calculi on the grounds that underspecification and nondeterminacy are distinct notions that ought not be confused --- though years later he published a soundness proof for a program logic, in which that confusion is exploited to good effect.
A positive answer to the question can be justified by embedding programs in a larger space of specifications, in a way that faithfully reflects a given semantics of programs.
Our theorem is a result of this kind.  The larger space makes it possible to express important requirements such as noninterference.
However, we do not put forward a compelling notion of specification that encompasses hyperproperties and supports a notion of refinement analogous to existing notions for trace properties.  Rather, we hope the paper inspires or annoys the reader enough to provoke further research. 









\medskip

\textbf{Acknowledgements.}
Anonymous reviewers offered helpful suggestions 
and pointed out errors, omissions, and infelicities in an earlier version.

The authors were partially supported by NSF award 1718713.

\bibliographystyle{splncs04}
\bibliography{tony}

\begin{thebibliography}{10}
\providecommand{\url}[1]{\texttt{#1}}
\providecommand{\urlprefix}{URL }
\providecommand{\doi}[1]{https://doi.org/#1}

\bibitem{AartsEtal95}
Aarts, C., Backhouse, R.C., Boiten, E.A., Doornbos, H., van Gasteren, N., van
  Geldrop, R., Hoogendijk, P.F., Voermans, E., van~der Woude, J.: Fixed-point
  calculus. Inf. Process. Lett.  \textbf{53}(3),  131--136 (1995)

\bibitem{AssafN16}
Assaf, M., Naumann, D.A.: Calculational design of information flow monitors.
  In: Computer Security Foundations (2016)

\bibitem{AssafNSTT17}
Assaf, M., Naumann, D.A., Signoles, J., Totel, {\'E}., Tronel, F.:
  Hypercollecting semantics and its application to static analysis of
  information flow. In: POPL (2017)

\bibitem{Back:book}
Back, R.J., von Wright, J.: Refinement Calculus: A Systematic Introduction.
  Springer-Verlag (1998)

\bibitem{BalliuDG11}
Balliu, M., Dam, M., Guernic, G.L.: Epistemic temporal logic for information
  flow security. In: Programming Languages and Analysis for Security (2011)

\bibitem{BanksJ10a}
Banks, M.J., Jacob, J.L.: Unifying theories of confidentiality. In: Unifying
  Theories of Programming (2010)

\bibitem{BanksJ14}
Banks, M.J., Jacob, J.L.: On integrating confidentiality and functionality in a
  formal method. Formal Asp. Comput.  \textbf{26}(5),  963--992 (2014)

\bibitem{BirdDeM}
Bird, R., de~Moor, O.: Algebra of Programming. Prentice-Hall (1996)

\bibitem{ClarksonFKMRS14}
Clarkson, M.R., Finkbeiner, B., Koleini, M., Micinski, K.K., Rabe, M.N.,
  S{\'{a}}nchez, C.: Temporal logics for hyperproperties. In: Principles of
  Security and Trust (2014)

\bibitem{ClarksonS10}
Clarkson, M.R., Schneider, F.B.: Hyperproperties. Journal of Computer Security
  \textbf{18}(6),  1157--1210 (2010)

\bibitem{Cou99}
Cousot, P.: The calculational design of a generic abstract interpreter. In:
  Broy, M., Steinbr{\"u}ggen, R. (eds.) Calculational System Design, vol.~173.
  NATO ASI Series F. IOS Press, Amsterdam (1999)

\bibitem{Cousot02}
Cousot, P.: Constructive design of a hierarchy of semantics of a transition
  system by abstract interpretation. Theor. Comput. Sci.  \textbf{277}(1-2),
  47--103 (2002)

\bibitem{CousotCousot79}
Cousot, P., Cousot, R.: Systematic design of program analysis frameworks. In:
  POPL (1979)

\bibitem{MartinSCP}
Gardiner, P.H., Martin, C.E., de~Moor, O.: An algebraic construction of
  predicate transformers. Science of Computer Programming  \textbf{22},  21--44
  (1994)

\bibitem{GotliboymNaumann}
Gotliboym, M., Naumann, D.A.: Some observations on hypercollecting semantics
  and subset closed hyperproperties,
  \url{https://www.cs.stevens.edu/~naumann/pub/noteSSC.pdf}

\bibitem{HalpernEtalKnowl95}
Halpern, J.Y., Fagin, R., Moses, Y., Vardi, M.Y.: Reasoning About Knowledge.
  {MIT Press} (1995)

\bibitem{HeHoareSanders86}
He, J., Hoare, C.A.R., Sanders, J.W.: Data refinement refined. In: European
  Symposium on Programming (1986)

\bibitem{HoareL74}
Hoare, C.A.R., Lauer, P.E.: Consistent and complementary formal theories of the
  semantics of programming languages. Acta Inf.  \textbf{3},  135--153 (1974)

\bibitem{HoareHe}
Hoare, C., He, J.: Unifying Theories of Programming. Prentice-Hall (1998)

\bibitem{HoareMSW11}
Hoare, T., M{\"{o}}ller, B., Struth, G., Wehrman, I.: Concurrent {Kleene}
  algebra and its foundations. J. Log. Algebr. Program.  \textbf{80}(6),
  266--296 (2011)

\bibitem{Jacob88}
Jacob, J.: Security specifications. In: IEEE Symp. on Sec. and Priv. (1988)

\bibitem{Joshi:Leino:SCP00}
Joshi, R., Leino, K.R.M.: A semantic approach to secure information flow.
  Science of Computer Programming  \textbf{37}(1--3),  113--138 (2000)

\bibitem{Kozen00}
Kozen, D.: On {Hoare} logic and {Kleene} algebra with tests. {ACM} Trans.
  Comput. Log.  \textbf{1}(1),  60--76 (2000)

\bibitem{Mantel02}
Mantel, H.: On the composition of secure systems. In: IEEE Symp. on Sec. and
  Priv. (2002)

\bibitem{MartinCR07}
Martin, C.E., Curtis, S.A., Rewitzky, I.: Modelling angelic and demonic
  nondeterminism with multirelations. Sci. Comput. Program.  \textbf{65}(2),
  140--158 (2007)

\bibitem{MastroeniP17}
Mastroeni, I., Pasqua, M.: Hyperhierarchy of semantics - {A} formal framework
  for hyperproperties verification. In: Static Analysis Symposium (2017)

\bibitem{MastroeniP18}
Mastroeni, I., Pasqua, M.: Verifying bounded subset-closed hyperproperties. In:
  Static Analysis Symposium (2018)

\bibitem{Morgan:book}
Morgan, C.: Programming from Specifications, second edition. Prentice Hall
  (1994)

\bibitem{Morgan09}
Morgan, C.: The shadow knows: Refinement and security in sequential programs.
  Sci. Comput. Program.  \textbf{74}(8),  629--653 (2009)

\bibitem{Morgan:dr}
Morgan, C., Gardiner, P.: Data refinement by calculation. Acta Inf.
  \textbf{27},  481--503 (1990)

\bibitem{MorrisBT09}
Morris, J.M., Bunkenburg, A., Tyrrell, M.: Term transformers: {A} new approach
  to state. ACM Trans. Progr. Lang. Syst.  \textbf{31}(4) (2009)

\bibitem{Naumann95a}
Naumann, D.A.: Data refinement, call by value, and higher order programs.
  Formal Aspects of Computing  \textbf{7},  652--662 (1995)

\bibitem{Naumann98}
Naumann, D.A.: A categorical model for higher order imperative programming.
  Mathematical Structures in Computer Science  \textbf{8}(4),  351--399 (1998)

\bibitem{Naumann15}
Naumann, D.A.: Towards patterns for heaps and imperative lambdas. Journal of
  Logical and Algebraic Methods in Programming  \textbf{85}(5),  1038--1056
  (2016)

\bibitem{RoscoeWW1994}
Roscoe, A.W., Woodcock, J., Wulf, L.: Non-interference through determinism. In:
  European Symposium on Research in Computer Security {(ESORICS)} (1994)

\bibitem{SabelfeldSandsDeclassJ}
Sabelfeld, A., Sands, D.: Dimensions and principles of declassification.
  Journal of Computer Security  (2007)

\bibitem{Sampaio97}
Sampaio, A.: An Algebraic Approach to Compiler Design, {AMAST} Series in
  Computing, vol.~4. World Scientific (1997)

\bibitem{Struth16}
Struth, G.: On the expressive power of {Kleene} algebra with domain. Inf.
  Process. Lett.  \textbf{116}(4),  284--288 (2016)

\end{thebibliography}

\end{document}
